\documentclass[runningheads]{llncs}

\usepackage{amsmath,amssymb}
\usepackage{thm-restate}
\usepackage{stmaryrd,mathrsfs,galois}
\usepackage{subcaption}

\usepackage[color]{changebar}
\cbcolor{red}

\captionsetup{compatibility=false}
\usepackage[normalem]{ulem}

\AtBeginDocument{%
\setlength{\textfloatsep}{0.4\textfloatsep}
\setlength{\floatsep}{0.4\floatsep}
\setlength{\intextsep}{0.4\intextsep}
\setlength{\intextsep}{0.4\intextsep}
\setlength{\abovecaptionskip}{0.4\abovecaptionskip}
\setlength{\belowcaptionskip}{0.4\belowcaptionskip}
\setlength{\abovedisplayskip}{0.4\abovedisplayskip}
\setlength{\belowdisplayskip}{0.4\belowdisplayskip}
\clubpenalty=150
\widowpenalty=0
\displaywidowpenalty=50
}

\newcommand{\romanqed}{$\righthalfcup$}
\renewenvironment{example}[1][]{
  \par
  \trivlist
\item\ignorespaces
  \refstepcounter{example}
  \textit{Example \arabic{example}.}\ifx&#1& \else \;(#1) \fi \hspace{1pt}
}{
  \null\hfill\romanqed\endtrivlist
}

\newcommand{\tuple}[1]{\langle #1 \rangle}
\newcommand{\defeq}{\triangleq}
\renewcommand{\vec}[1]{\mathbf{#1}}

\newcommand{\hadamard}[2]{#1 * #2}
\newcommand{\proj}[1]{\Pi_{#1}} 
\newcommand{\rim}[2]{#1 \boxtimes #2} 
\newcommand{\image}[2]{\textit{image}(#1,#2)}

\newcommand{\LRA}{\exists\text{LRA}}
\newcommand{\LIRA}{\exists\text{LIRA}}

\newcommand{\tfsem}[1]{\mathbf{TF}\llbracket #1 \rrbracket}
\newcommand{\ite}[3]{\textbf{if } #1 \textbf{ then } #2 \textbf{ else } #3}
\newcommand{\while}[2]{\textbf{while } #1 \textbf{ do } #2}
\newcommand{\VAS}{$\mathbb{Q}$-VASR}
\newcommand{\VASS}{$\mathbb{Q}$-VASRS}

\newcommand{\VASSYM}{V}
\newcommand{\SYMVASS}{\mathcal{V}}
\newcommand{\res}{\vec{r}}
\newcommand{\resnv}{r}
\newcommand{\adv}{\vec{a}}

\newcommand{\reach}[1]{\textit{reach}(#1)}

\DeclareFontFamily{U}  {MnSymbolC}{}
\DeclareFontShape{U}{MnSymbolC}{m}{n}{
    <-6>  MnSymbolC5
   <6-7>  MnSymbolC6
   <7-8>  MnSymbolC7
   <8-9>  MnSymbolC8
   <9-10> MnSymbolC9
  <10-12> MnSymbolC10
  <12->   MnSymbolC12}{}
\DeclareSymbolFont{MnSyC}{U}{MnSymbolC}{m}{n}
\DeclareMathSymbol{\righthalfcup}{\mathrel}{MnSyC}{184}
\DeclareMathSymbol{\ostar}{\mathrel}{MnSyC}{102}

\usepackage[ruled,linesnumbered,vlined,commentsnumbered]{algorithm2e}
\usepackage{tikz-cd}
\usetikzlibrary{shapes}

\begin{document}

\title{Loop Summarization with Rational Vector Addition Systems}
\author{Jake Silverman \and Zachary Kincaid}
\institute{Princeton University}
\maketitle

\begin{abstract}
  This paper presents a technique for computing numerical loop
  summaries.  The method synthesizes a rational vector addition system
  with resets (\VAS{}) that simulates the action of an input loop, and
  then uses the reachability relation of that \VAS{} to over-approximate
  the behavior of the loop.  The key technical problem solved in this
  paper is to automatically synthesize a \VAS{} that is a \textit{best
    abstraction} of a given loop in the sense that (1) it simulates
  the loop and (2) it is simulated by any other \VAS{} that simulates
  the loop.  Since our loop summarization scheme is based on computing
  the \textit{exact} reachability relation of a \textit{best}
  abstraction of a loop, we can make theoretical guarantees about its
  behavior.  Moreover, we show experimentally that the technique is precise and
  performant in practice.
\end{abstract}

\section{Introduction} \label{sec:introduction}
Modern software verification techniques employ a number of heuristics
for reasoning about loops.  While these heuristics are often
effective, they are unpredictable.  For example, an abstract
interpreter may fail to find the most precise invariant expressible in
the language of its abstract domain due to imprecise widening, or a
software-model checker might fail to terminate because it generates
interpolants that are insufficiently general.  
This paper presents a loop summarization technique that is capable of 
generating loop invariants in an expressive and decidable language and 
provides theoretical guarantees about invariant quality.

The key idea behind our technique is to leverage reachability
results of vector addition systems (VAS) for invariant generation.
Vector addition systems are a class of infinite-state transition
systems with decidable reachability, classically used as a
model of parallel systems \cite{KM1969}.  We consider a variation
of VAS, \textit{rational VAS with resets (\VAS{})}, wherein there is a finite
number of rational-typed variables and a finite set of transitions
that simultaneously update each variable in the system by either
adding a constant value or (re)setting the variable to a constant
value.  Our interest in \VAS{}s stems from the fact that there is
(polytime) procedure to compute a linear arithmetic formula that
represents a \VAS's reachability relation \cite{RP:HH2014}.

Since the reachability relation of a \VAS{} is computable, the
dynamics of \VAS{} can be analyzed without relying on heuristic
techniques.  However, there is a gap between \VAS{} and the loops that
we are interested in summarizing. The latter typically use a rich set of
operations (memory manipulation, conditionals, non-constant
increments, non-linear arithmetic, etc) and cannot be analyzed
precisely. We bridge the gap with a procedure that, for any loop,
synthesizes a \VAS{} that simulates it.
The reachability relation of
the \VAS{} can then be used to over-approximate the behavior of the loop.
Moreover, we prove that if a loop is expressed in linear rational
arithmetic (LRA), then our procedure synthesizes a \textit{best} \VAS{}
abstraction, in the sense that it simulates any other \VAS{} that
simulates the loop.    That is, imprecision in the analysis is due to 
inherent limitations of the \VAS{} model, rather heuristic algorithmic choices.

One limitation of the model is that \VAS{}s over-approximate
multi-path loops by treating the choice between paths as
non-deterministic.  We show that \VASS{}, \VAS{} extended with control
states, can be used to improve our invariant generation scheme by
encoding control flow information and inter-path control dependencies
that are lost in the \VAS{} abstraction.  We give an algorithm for
synthesizing a \VASS{} abstraction of a given loop, which (like our
\VAS{} abstraction algorithm) synthesizes \textit{best} abstractions
under certain assumptions.

Finally, we note that our analysis techniques extend to complex
control structures (such as nested loops) by employing summarization
compositionally (i.e., ``bottom-up'').  For example, our analysis
summarizes a nested loop by first summarizing its inner loops, and
then uses the summaries to analyze the outer loop.  As a result of
compositionality, our analysis can be applied to partial programs, is
easy to parallelize, and has the potential to scale to large code
bases.

The main contributions of the paper are as follows:
\begin{itemize}
\item We present a procedure to synthesize \VAS{} abstractions of
  transition formulas.  For transition formulas in linear rational
  arithmetic, the synthesized \VAS{} abstraction is a \textit{best} abstraction.
\item We present a technique for improving the precision of our
  analysis by using \VAS{} with states to capture loop control structure.
\item We implement the proposed loop summarization techniques and show
  that their ability to verify user assertions is comparable to
  software model checkers, while at the same time providing
  theoretical guarantees of termination and invariant quality.
\end{itemize}

\subsection{Outline}\label{subsec:outline}

This section illustrates the high-level structure of our invariant
generation scheme.  The goal is to compute a \textit{transition
  formula} that summarizes the behavior of a given program.  A
transition formula is a formula over a set of program variables
$\textsf{Var}$ along with primed copies $\textsf{Var}'$, representing
the state of the program before and after executing a computation
(respectively).  For any given program $P$, a transition formula
$\tfsem{P}$ can be computed by recursion on syntax:\footnote{This
  style of analysis can be extended from a simple block-structured
  language to one with control flow and recursive procedures using the
  framework of algebraic program analysis
  \cite{JACM:Tarjan1981b,PLDI:KBFR17}.}
\begin{align*}
  \tfsem{\texttt{x := }e} &\defeq \text{x}' = e \land \bigwedge_{\texttt{y}\neq\texttt{x} \in \textsf{Var}} \texttt{y}' = \texttt{y}\\
  \tfsem{\ite{c}{P_1}{P_2}} &\defeq (c \land \tfsem{P_1}) \lor (\lnot c \land \tfsem{P_2})\\
  \tfsem{P_1 \texttt{;} P_2} & \defeq \exists X \in \mathbb{Z}. \tfsem{P_1}[\textsf{Var}' \mapsto X] \land \tfsem{P_2}[\textsf{Var} \mapsto X]\\
  \tfsem{\while{c}{P}} &\defeq (c \land \tfsem{P})^\star \land (\lnot c[\textsf{Var} \mapsto \textsf{Var}'])
\end{align*}
where $(-)^\star$ is a function that computes an over-approximation of
the transitive closure of a transition formula.  The contribution of
this paper is a method for computing this $(-)^\star$ operation, which
is based on first over-approximating the input transition formula by a
\VAS{}, and then computing the (exact) reachability relation of the \VAS{}.

\begin{figure}[t]
\begin{subfigure}[t]{6.5cm}
  \textbf{procedure} \texttt{enqueue(elt):}\\
  \hspace*{10pt}\texttt{back := cons(elt,back)}\\
  \hspace*{10pt}\texttt{size := size + 1}\\
  \vspace*{2pt}\\
  \textbf{procedure} \texttt{dequeue():}\\
  \hspace*{10pt}\textbf{if} \texttt{(front == nil)} \textbf{then}\\
  \hspace*{20pt}{\color{gray}\texttt{// Reverse back, append to front}}\\
  \hspace*{20pt}\textbf{while} \texttt{(back != nil)} \textbf{do}\\
  \hspace*{30pt}\texttt{front := cons(head(back),front)}\\
  \hspace*{30pt}\texttt{back := tail(back)}\\
  \hspace*{10pt}\texttt{result := head(front)}\\
  \hspace*{10pt}\texttt{front := tail(front)}\\
  \hspace*{10pt}\texttt{size := size - 1}\\
  \hspace*{10pt}\textbf{return} \texttt{result}
  \subcaption{\label{fig:mem-queue}Persistent queue}
\end{subfigure}
\begin{subfigure}[t]{6.5cm}
  \textbf{procedure} \texttt{enqueue():}\\
  \hspace*{10pt}\texttt{back\_len := back\_len + 1}\\
  \hspace*{10pt}\texttt{mem\_ops := mem\_ops + 1}\\
  \hspace*{10pt}\texttt{size := size + 1}\\
  \textbf{procedure} \texttt{dequeue():}\\
  \hspace*{10pt}\textbf{if} \texttt{(front\_len == 0)} \textbf{then}\\
  \hspace*{20pt}\textbf{while} \texttt{(back\_len != 0)} \textbf{do}\\
  \hspace*{30pt}\texttt{front\_len := front\_len + 1}\\
  \hspace*{30pt}\texttt{back\_len := back\_len - 1}\\
  \hspace*{30pt}\texttt{mem\_ops := mem\_ops + 3}\\
  \hspace*{10pt}\texttt{size := size - 1}\\
  \hspace*{10pt}\texttt{front\_len := front\_len - 1}\\
  \hspace*{10pt}\texttt{mem\_ops := mem\_ops + 2}\\
  \textbf{procedure} \texttt{harness():}\\
  \hspace*{10pt}\texttt{nb\_ops := 0}\\
  \hspace*{10pt}\textbf{while} \texttt{nondet()} \textbf{do}\\
  \hspace*{20pt}{nb\_ops := nb\_ops + 1}\\
  \hspace*{20pt}\textbf{if} \texttt{(size > 0 \&\& nondet())}\\
  \hspace*{30pt}enqueue()\\
  \hspace*{20pt}\textbf{else}\\
  \hspace*{30pt}dequeue()
  \subcaption{\label{fig:int-queue}Integer model \& harness}
\end{subfigure}
\caption{\label{fig:queue}A persistent queue and integer model.
  \texttt{back\_len} and \texttt{front\_len} models the lengths of the
  lists \texttt{front} and \texttt{back}; \texttt{mem\_ops} counts the
  number of memory operations in the computation.}
\end{figure}

We illustrate the analysis on an integer model
of a persistent queue data structure, pictured in
Figure~\ref{fig:queue}.  The example consists of two operations
(\texttt{enqueue} and \texttt{dequeue}), as well as a test harness
(\texttt{harness}) that non-deterministically executes \texttt{enqueue}
and \texttt{dequeue} operations.  The queue achieves $O(1)$ amortized
memory operations (\texttt{mem\_ops}) in \texttt{enqueue} and \texttt{queue} by implementing the queue as
two lists, \texttt{front} and \texttt{back} (whose lengths are modeled
as \texttt{front\_len} and \texttt{back\_len}, respectively): 
the sequence of elements in the queue is the \texttt{front}
list followed by the reverse of the \texttt{back} list.  We will show
that the queue functions use $O(1)$ amortized memory operations by finding a summary
for \texttt{harness} that implies a linear bound on \texttt{mem\_ops}
(the number of memory operations in the computation) in terms of
\texttt{nb\_ops} (the total number of
\texttt{enqueue}/\texttt{dequeue} operations executed in some
sequence of operations).

We analyze the queue compositionally, in ``bottom-up''
fashion (i.e., starting from deeply-nested code and working our way
back up to a summary for \texttt{harness}).  There are two loops of
interest, one in \texttt{dequeue} and one in \texttt{harness}.  Since
the \texttt{dequeue} loop is nested inside the \texttt{harness} loop,
\texttt{dequeue} is analyzed first.  We start by computing a transition
formula that represents one execution of the body of the \texttt{dequeue} loop:
\[
\textit{Body}_{\texttt{deq}} =
\texttt{back\_len} > 0 \land
\left(\begin{array}{@{\hspace{0pt}}l@{\hspace{1pt}}l@{\hspace{0pt}}}
  & \texttt{front\_len}' = \texttt{front\_len} + 1\\
  \land & \texttt{back\_len}' = \texttt{back\_len} - 1\\
  \land & \texttt{mem\_ops}' = \texttt{mem\_ops} + 3\\
  \land & \texttt{size}' = \texttt{size}
\end{array}\right)
\]
Observe that each variable in the loop is incremented by a constant
value.  As a result, the loop update can be captured faithfully by a
vector addition system.  In particular, we see that this loop body
formula is simulated by the \VAS{} $V_{\texttt{deq}}$ (below), where the
correspondence between the state-space of
$\textit{Body}_{\texttt{deq}}$ and $V_{\texttt{deq}}$ is given by the
identity transformation (i.e., each dimension of $V_{\texttt{deq}}$
simply represents one of the variables of
$\textit{Body}_{\texttt{deq}}$).
{\small
  \[\begin{bmatrix}w\\x\\y\\z\end{bmatrix} = \begin{bmatrix}1&0&0&0\\0&1&0&0\\0&0&1&0\\0&0&0&1\end{bmatrix}\begin{bmatrix}\texttt{front\_len}\\\texttt{back\_len}\\\texttt{mem\_ops}\\\texttt{size}\\\end{bmatrix};
  \hspace*{0.3cm}
    V_{\texttt{deq}} = \left\{
    \begin{bmatrix}w \\ x \\ y \\ z\end{bmatrix}
      \rightarrow
      \begin{bmatrix}w + 1 \\ x-1 \\ y + 3 \\ z\end{bmatrix}
    \right\}\ .
    \]}
A formula representing the reachability relation of a vector
    addition system can be computed in polytime.  For the case of
    $V_{\texttt{deq}}$, a formula representing $k$ steps of the \VAS{} is
    simply
    \begin{equation} \label{eq:vas-reach}
      w' = w + k \land x' = x - k \land y' = y + 3k \land z' = z\ .\tag{$\dagger$}
    \end{equation}

    To capture information about the pre-condition of the loop, we can
    project the primed variables to obtain $\texttt{back\_len} > 0$;
    similarly, for the post-condition, we can project the unprimed
    variables to obtain $\texttt{back\_len}' \geq 0$.  Finally,
    combining ($\dagger$) (translated back into the
    vocabulary of the program) and the pre/post-condition, we form the
    following approximation of the \texttt{dequeue} loop's behavior:
  \[ \exists k. k \geq 0 \land \left(\begin{array}{@{\hspace{0pt}}l@{\hspace{1pt}}l@{\hspace{0pt}}}
  & \texttt{front\_len}' = \texttt{front\_len} + k\\
  \land & \texttt{back\_len}' = \texttt{back\_len} - k\\
  \land & \texttt{mem\_ops}' = \texttt{mem\_ops} + 3k\\
  \land & \texttt{size}' = \texttt{size}
\end{array}\right) \land \left(k > 0 \Rightarrow \left(\begin{array}{@{\hspace{0pt}}l@{\hspace{1pt}}l@{\hspace{0pt}}}& \texttt{back\_len} > 0\\ \land & \texttt{back\_len}' \geq 0)\end{array}\right)\right)\ .
\]

  Using this summary for the \texttt{dequeue} loop, we proceed to
  compute a transition formula for the body of the \texttt{harness}
  loop (omitted for brevity).  Just as with the \texttt{dequeue} loop,
  we analyze the \texttt{harness} loop by synthesizing a \VAS{}
  that simulates it, $V_{\texttt{har}}$ (below), where the correspondence between the state space of the \texttt{harness} loop and $V_{\texttt{har}}$ is given by the transformation $S_{\texttt{har}}$:

  $\begin{array}{l}
    \begin{bmatrix}v\\w\\x\\y\\z\end{bmatrix} =
  \underbrace{\begin{bmatrix}
    0 & 0 & 0 & 1 & 0\\
    0 & 1 & 0 & 0 & 0\\
    0 & 3 & 1 & 0 & 0\\
    1 & 1 & 0 & 0 & 0\\
    0 & 0 & 0 & 0 & 1\\
  \end{bmatrix}}_{S_{\texttt{har}}}
  \begin{bmatrix}\texttt{front\_len}\\\texttt{back\_len}\\\texttt{mem\_ops}\\\texttt{size}\\\texttt{nb\_ops}\end{bmatrix}; i.e., \left(\begin{array}{ll}
      &\texttt{size} = v\\
      \land&\texttt{back\_len} = w\\
      \land&\texttt{mem\_ops}+3\texttt{back\_len} = x\\
      \land&\texttt{back\_len} + \texttt{front\_len} = y\\
      \land&\texttt{nb\_ops} = z\\
    \end{array}\right).
  \\
    V_{\texttt{har}} = \left\{\underbrace{\begin{bmatrix}v\\w\\x\\y\\z\end{bmatrix} \rightarrow \begin{bmatrix}v+1\\w+1\\x+4\\y+1\\z+1\end{bmatrix}}_{\texttt{enqueue}},
\underbrace{\begin{bmatrix}v\\w\\x\\y\\z\end{bmatrix} \rightarrow \begin{bmatrix}v-1\\w\\x+2\\y-1\\z+1\end{bmatrix}}_{\texttt{dequeue} \text{ fast}},
\underbrace{\begin{bmatrix}v\\w\\x\\y\\z\end{bmatrix} \rightarrow \begin{bmatrix}v-1\\0\\x+2\\y-1\\z+1\end{bmatrix}}_{\texttt{dequeue} \text{ slow}}
\right\}
  \end{array}$

    Unlike the \texttt{dequeue} loop, we do not get an exact
    characterization of the dynamics of each changed variable.  In
    particular, in the slow \texttt{dequeue} path through the loop,
    the value of \texttt{front\_len}, \texttt{back\_len}, and
    \texttt{mem\_ops} change by a variable amount.  Since
    \texttt{back\_len} is set to 0, its behavior can be captured by
    a reset.  The dynamics of \texttt{front\_len} and
    \texttt{mem\_ops} cannot be captured by a \VAS{}, but (using our
    \texttt{dequeue} summary) we can observe that the sum of
      \texttt{front\_len} + \texttt{back\_len} is decremented by 1,
      and the sum of \texttt{mem\_ops} + 3\texttt{back\_len} is incremented
      by 2.

    We compute the following formula that captures the reachability
    relation of $V_{\texttt{har}}$ (taking $k_1$ steps of
    \texttt{enqueue}, $k_2$ steps of \texttt{dequeue} fast, and $k_3$
    steps of \texttt{dequeue} slow) under the inverse image of the
    state correspondence $S_{\texttt{har}}$:
    \[
    \left(\begin{array}{@{\hspace{0pt}}l@{\hspace{1pt}}l@{\hspace{0pt}}}
      & \texttt{size}' = \texttt{size} + k_1 - k_2 - k_3\\
      \land & \left((k_3 = 0 \land \texttt{back\_len}' = \texttt{back\_len} + k_1) \lor (k_3 > 0 \land 0 \leq \texttt{back\_len}' \leq k_1)\right)\\
      \land &\texttt{mem\_ops}' + 3\texttt{back\_len}' = \texttt{mem\_ops} + 3\texttt{back\_len} + 4k_1 + 2k_2 + 2k_3\\
      \land &\texttt{front\_len}' + \texttt{back\_len}' = \texttt{front\_len} + \texttt{back\_len} + k_1 - k_2 - k_3\\
      \land &\texttt{nb\_ops}' = \texttt{nb\_ops} + k_1 + k_2 + k_3
    \end{array}\right)
   \]
    From the above formula (along with pre/post-condition
    formulas), we obtain a summary for the \texttt{harness} loop
    (omitted for brevity).  Using this summary we can prove 
    (supposing that we
    start in a state where all variables are zero) that
    \texttt{mem\_ops}
    is at most 4 times \texttt{nb\_ops} (i.e., \texttt{enqueue} and
    \texttt{dequeue} use O(1) amortized memory operations).



\section{Background} \label{sec:background}
The syntax of $\LIRA{}$, the existential fragment of linear
integer/rational arithmetic, is given by the following grammar:
\begin{align*}
  s,t \in \textsf{Term} &::= c \mid x \mid s + t \mid c \cdot t\\
  F,G \in \textsf{Formula} &::= s < t \mid s = t \mid F \land G \mid F \lor G \mid \exists x \in \mathbb{Q}. F \mid \exists x \in \mathbb{Z}. F
\end{align*}
where $x$ is a (rational sorted) variable symbol and $c$ is a rational
constant.  Observe that (without loss of generality) formulas are free
of negation.  $\LRA{}$ (linear rational arithmetic) refers to the
fragment of $\LIRA{}$ that omits quantification over the integer sort.

A \textbf{transition system} is a pair $(S, \rightarrow)$ where $S$ is
a (potentially infinite) set of states and $\rightarrow \subseteq S
\times S$ is a transition relation.  For a transition relation
$\rightarrow$, we use $\rightarrow^*$ to denote its reflexive,
transitive closure.

A \textbf{transition formula} is a formula $F(\vec{x},\vec{x}')$ whose
free variables range over $\vec{x} = x_1,...,x_n$ and $\vec{x}' =
x_1',...,x_n'$ (we refer to the number $n$ as the \textit{dimension}
of $F$); these variables designate the state before and after a
transition.  In the following, we assume that transition formulas are
defined over $\LIRA{}$.  For a transition formula
$F(\vec{x},\vec{x}')$ and vectors of terms $\vec{s}$ and $\vec{t}$, we
use $F(\vec{s},\vec{t})$ to denote the formula $F$ with each $x_i$
replaced by $s_i$ and each $x_i'$ replaced by $t_i$.  A transition
formula $F(\mathbf{x},\mathbf{x}')$ defines a transition system
$(S_{F}, \rightarrow_{F})$, where the state space $S_{F}$ is
$\mathbb{Q}^{n}$ and which can transition $\vec{u} \rightarrow_{F}
\vec{v}$ iff $F(\vec{u},\vec{v})$ is valid.


For two rational vectors $\vec{a}$ and $\vec{b}$ of the same dimension
$d$, we use $\vec{a} \cdot \vec{b}$ to denote the inner product
$\vec{a} \cdot \vec{b} = \sum_{i=1}^d a_ib_i$ and $\vec{a} * \vec{b}$
to denote the pointwise (aka Hadamard) product $(\vec{a} * \vec{b})_i
= a_i b_i$.  For any natural number $i$, we use $\vec{e}_i$ to denote
the standard basis vector in the $i$th direction (i.e., the vector consisting
of all zeros except the $i$th entry, which is 1), where the dimension
of $\vec{e}_i$ is understood from context.  We use $I_n$ to denote the
$n \times n$ identity matrix.




\begin{definition} A \textbf{rational vector addition system with resets} (\textbf{\VAS{}}) of dimension $d$ is a finite set $\VASSYM{} \subseteq \{0,1\}^d \times \mathbb{Q}^d$ of transformers.  Each transformer $(\res,\adv) \in \VASSYM{}$ consists of a binary \textit{reset vector} $\res$, and a rational \textit{addition vector}
  $\adv$, both of dimension $d$.  $\VASSYM{}$ defines a transition
  system $(S_{\VASSYM{}}, \rightarrow_{\VASSYM{}})$, where the state
  space $S_{\VASSYM{}}$ is $\mathbb{Q}^d$ and which can transition
  $\vec{u} \rightarrow_{\VASSYM{}} \vec{v}$ iff $\vec{v} = \res *
  \vec{u} + \adv$ for some $(\res,\adv) \in \VASSYM{}$.
\end{definition}


\begin{definition}
  A \textbf{rational vector addition system with resets and states} (\VASS{}) of dimension $d$ is a pair $\SYMVASS = (Q, E)$, where $Q$ is a finite set of control states, and $E \subseteq Q \times \{0,1\}^d \times \mathbb{Q}^d \times Q$ is a finite set of edges labeled by ($d$-dimensional) transformers.  $\SYMVASS$ defines a transition system $(S_{\SYMVASS}, \rightarrow_{\SYMVASS})$, where
the state space $S_{\SYMVASS}$ is $Q \times \mathbb{Q}^n$ and which can transition
$(q_1, \vec{u}) \rightarrow_{\SYMVASS} (q_2, \vec{v})$ iff
there is some edge $(q_1, (\res,\adv), q_2) \in E$ such that $\vec{v} = \res * \vec{u} + \adv$.
\end{definition}

Our invariant generation scheme is based on the following result, which is a simple consequence of the work of Haase and Halfon:
\begin{theorem}[\cite{RP:HH2014}] \label{thm:vass-reach}
  There is a polytime algorithm which, given a $d$-dimensional \VASS{} $\SYMVASS{} = (Q, E)$, computes an $\LIRA$ transition formula $\reach{\SYMVASS{}}$ such
  that for all $\vec{u},\vec{v} \in \mathbb{Q}^d$, we have
  $(p, \vec{u}) \rightarrow_{\SYMVASS{}}^* (q, \vec{v})$ for some control states $p, q \in Q$ if and only if
  $\vec{u} \rightarrow_{\reach{\SYMVASS{}}} \vec{v}$.
\end{theorem}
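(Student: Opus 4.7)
The plan is to reduce directly to the Haase--Halfon construction, which provides, for each pair of control states $(p, q) \in Q \times Q$, a polytime-computable $\LIRA{}$ formula $R_{p,q}(\vec{x}, \vec{x}')$ such that $R_{p,q}(\vec{u}, \vec{v})$ is valid if and only if $(p, \vec{u}) \rightarrow_{\SYMVASS{}}^* (q, \vec{v})$. Given this, I would define
\[
\reach{\SYMVASS{}}(\vec{x}, \vec{x}') \defeq \bigvee_{(p, q) \in Q \times Q} R_{p,q}(\vec{x}, \vec{x}'),
\]
so that $\vec{u} \rightarrow_{\reach{\SYMVASS{}}} \vec{v}$ holds precisely when there exist control states $p, q \in Q$ with $(p,\vec{u}) \rightarrow_{\SYMVASS{}}^* (q,\vec{v})$.

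Correctness in both directions is then immediate from the semantics of disjunction: the forward direction is witnessed by some concrete pair $(p, q)$ satisfying the corresponding disjunct, while the backward direction extracts the pair $(p, q)$ from whichever disjunct is satisfied. For polytime complexity, $|Q \times Q|$ is quadratic in the input size and each $R_{p,q}$ is, by the cited result, computable in polynomial time and of polynomial size, so the full disjunction is assembled in polynomial time and has polynomial size.

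The main obstacle, and the only part that is not entirely mechanical, is reconciling the setting of Haase and Halfon's theorem with ours. Their result is typically stated for integer-valued VASS with resets, whereas here configurations and addition vectors are rational-valued. Their per-pair reachability formula is derived by expressing reachability as a linear-arithmetic condition on the Parikh image of a path through the control graph (together with a reachability check on the underlying control-state transition graph), and this argument uses only the additive group structure of the coordinate space rather than any property specific to $\mathbb{Z}$. Hence the construction lifts to $\mathbb{Q}^d$-valued configurations essentially unchanged: one simply interprets the resulting linear constraints over $\mathbb{Q}$ instead of $\mathbb{Z}$, which is supported by $\LIRA{}$. Once this correspondence with the cited theorem is spelled out, the theorem follows by the disjunctive combination above.
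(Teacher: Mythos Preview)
Your proposal is correct and matches the paper's treatment: the paper does not give a proof of this theorem at all, stating only that it is ``a simple consequence of the work of Haase and Halfon'' and citing \cite{RP:HH2014}. Your sketch---disjoining the per-pair reachability formulas $R_{p,q}$ over $Q \times Q$ and noting that the Haase--Halfon construction lifts from $\mathbb{Z}$ to $\mathbb{Q}$ because it depends only on the additive structure---is precisely the kind of argument the paper is gesturing at, so there is nothing to compare.
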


Note that \VAS{} can be realized as \VASS{} with a single control
state, so this theorem also applies to \VAS{}.


\section{Approximating loops with vector addition systems} \label{sec:vas-abstraction}

In this section, we describe a method for over-approximating the
transitive closure of a transition formula using a \VAS{}.  This
procedure immediately extends to computing summaries for programs
(including programs with nested loops)
using the method outlined in Section~\ref{subsec:outline}.

The core algorithmic problem that we answer in this section is:
\textit{given a transition formula, how can we synthesize a (best)
  abstraction of that formula's dynamics as a \VAS{}?} 
We begin by
formalizing the problem: in particular, we define what it means for a \VAS{} to
simulate a transition formula and what it means for an abstraction to
be ``best.''
\begin{definition}
  Let $A = (\mathbb{Q}^n,\rightarrow_A)$ and $B =
  (\mathbb{Q}^m,\rightarrow_B)$ be transition systems operating over
  rational vector spaces.  A \textbf{linear simulation} from $A$ to
  $B$ is a linear transformation $S : \mathbb{Q}^{m \times n}
  $ such that for all $\vec{u},\vec{v} \in \mathbb{Q}^n$
  for which $\vec{u} \rightarrow_A \vec{v}$, we have $S\vec{u} \rightarrow_B
  S\vec{v}$.  We use $A \Vdash_S B$ to denote that $S$ is a linear
  simulation from $A$ to $B$.
\end{definition}

Suppose that $F(\vec{x},\vec{x}')$ is an $n$-dimensional transition
formula, $\VASSYM{}$ is a $d$-dimensional \VAS{}, and $S : \mathbb{Q}^{d  \times n}$
is linear transformation.
The key property of simulations that underlies our loop summarization
scheme is that if $F \Vdash_S \VASSYM{}$, then
$\reach{\VASSYM{}}(S\mathbf{x},S\mathbf{x}')$ (i.e., the reachability
relation of $\VASSYM{}$ under the inverse image of $S$)
over-approximates the transitive closure of $F$.  Finally, we observe
that simulation $F \Vdash_S \VASSYM{}$ can equivalently be defined by
the validity of the entailment $F \models \gamma(S,\VASSYM{})$, where
\[
  \gamma(S,\VASSYM{}) \defeq \bigvee_{(\res, \adv) \in \VASSYM{}} S\vec{x}' = \hadamard{\res}{S\vec{x}} + \adv
  \]
is a transition formula that represents the transitions that $\VASSYM{}$
simulates under transformation $S$.


Our task is to synthesize a linear transformation $S$ and a \VAS{}
$\VASSYM{}$ such that ${F \Vdash_S \VASSYM{}}$.  We call a pair
$(S,\VASSYM{})$, consisting of a rational matrix $S \in \mathbb{Q}^{d
  \times n}$ and a $d$-dimensional \VAS{} $\VASSYM{}$, a
\textbf{\VAS{} abstraction}. We say that $n$ is the \textit{concrete
  dimension} of $(S,\VASSYM{})$ and $d$ is the \textit{abstract
  dimension}.  If $F \Vdash_S \VASSYM{}$, then we say that
$(S,\VASSYM{})$ is a \textbf{\VAS{} abstraction of $F$}. A transition
formula may have many \VAS{} abstractions; we are interested in
computing a \VAS{} abstraction $(S,\VASSYM{})$ that results in the
most precise over-approximation of the transitive closure of $F$.
Towards this end, we define a preorder $\preceq$ on \VAS{}
abstractions, where $(S^1,\VASSYM{}^1) \preceq (S^2,\VASSYM{}^2)$ iff
there exists a linear transformation $T \in \mathbb{Q}^{e\times d}$
such that $\VASSYM{}^1 \Vdash_T \VASSYM{}^2$ and $TS^1 = S^2$ (where
$d$ and $e$ are the abstract dimensions of $(S^1,\VASSYM{}^1)$ and
$(S^2,\VASSYM{}^2)$, respectively).  Observe that if
$(S^1,\VASSYM{}^1) \preceq (S^2,\VASSYM{}^2)$, then
$\reach{\VASSYM{}^1}(S^1\mathbf{x},S^1\mathbf{x}') \models
\reach{\VASSYM{}^2}(S^2\mathbf{x},S^2\mathbf{x}')$.

Thus, our problem can be stated as follows: given a transition formula
$F$, synthesize a \VAS{} abstraction $(S,\VASSYM{})$ of $F$ such that
$(S,\VASSYM{})$ is \textit{best} in the sense that we have
$(S,\VASSYM{}) \preceq (\widetilde{S},\widetilde{\VASSYM{}})$ for any
\VAS{} abstraction $(\widetilde{S},\widetilde{\VASSYM{}})$ of $F$.
A solution to this problem is given in Algorithm~\ref{alg:core}.

\begin{algorithm}
\SetKwInOut{Input}{input}\SetKwInOut{Output}{output}
\Input{Transition formula $F$ of dimension $n$}
\Output{\VAS{} abstraction of $F$; Best \VAS{} abstraction if $F$ in $\LRA$}
Skolemize existentials of $F$\;
$(S, \VASSYM{}) \gets (I_n, \emptyset)$\tcp*{$(I_n,\emptyset)$ is least in $\preceq$ order}
$\Gamma \gets F$\;
\While{$\Gamma$ is satisfiable\label{alg:core:unsat}}{
  Let $M$ be a model of $\Gamma$\;
  $C \gets $ cube of the DNF of $F$ with $M \models C$\;
  $(S,\VASSYM{}) \gets (S,\VASSYM{}) \sqcup \hat{\alpha}(C)$\label{alg:core:combine}\;
  $\Gamma \gets \Gamma \wedge \lnot \gamma(S,\VASSYM{})$
}
\Return{$(S,\VASSYM{})$}
\caption{\texttt{abstract-VASR}(F) \label{alg:core}}
\end{algorithm}

Algorithm~\ref{alg:core} follows the familiar pattern of an
AllSat-style loop.  The algorithm takes as input a transition formula
$F$. It maintains a \VAS{} abstraction $(S,V)$ and a formula $\Gamma$,
whose models correspond to the transitions of $F$ that are
\textit{not} simulated by $(S,V)$.  The idea is to build $(S,V)$
iteratively by sampling transitions from $\Gamma$, augmenting $(S,V)$
to simulate the sample transition, and then updating $\Gamma$
accordingly.  We initialize $(S,V)$ to be $(I_n, \emptyset)$, the
canonical least \VAS{} abstraction in $\preceq$ order, and $\Gamma$ to
be $F$ (i.e., $(I_n,\emptyset)$ does not simulate any transitions of
$F$).  Each loop iteration proceeds as follows.  First, we sample a
model $M$ of $\Gamma$ (i.e., a transition that is allowed by $F$ but
not simulated by $(S,\VASSYM{})$).  We then generalize that transition
to a set of transitions by using $M$ to select a cube $C$ of the DNF
of $F$ that contains $M$.  Next, we use the procedure described in
Section~\ref{sec:conjunctive} to compute a \VAS{} abstraction
$\hat{\alpha}(C)$ that simulates the transitions of $C$.  We then
update the \VAS{} abstraction $(S,V)$ to be the least upper bound of
$(S,V)$ and $\hat{\alpha}(C)$ (w.r.t. $\preceq$ order) using the
procedure described in Section~\ref{sec:join} (line
\ref{alg:core:combine}).  Finally, we block any transition simulated
by the least upper bound (including every transition in $C$) from
being sampled again by conjoining $\lnot \gamma(S,\VASSYM)$ to
$\Gamma$.  The loop terminates when $\Gamma$ is unsatisfiable, in
which case we have that $F \Vdash_S \VASSYM{}$.
Theorem~\ref{thm:vas-abstraction} gives the correctness statement
for this algorithm.

\begin{restatable}{theorem}{vasabstraction}
  \label{thm:vas-abstraction}
  Given a transition formula $F$, Algorithm~\ref{alg:core} computes a
  simulation $S$ and \VAS{} $V$ such that $F \Vdash_S V$.  Moreover,
  if $F$ is in $\LRA{}$, Algorithm~\ref{alg:core} computes a
  \emph{best} \VAS{} abstraction of $F$.
\end{restatable}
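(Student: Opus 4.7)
The plan is to split the theorem into its two claims and attack them in order: first, show that Algorithm~\ref{alg:core} terminates and that, upon exit, $F \Vdash_S V$ holds; second, assuming $F$ is in $\LRA$, show that the returned abstraction $(S,V)$ is below every other \VAS{} abstraction of $F$ in the $\preceq$ order. Both parts rest on two black-boxes from later sections: the cube abstraction $\hat{\alpha}(C)$ (from Section~\ref{sec:conjunctive}), which I would need to assume simulates $C$ and is a best \VAS{} abstraction of $C$ when $C$ is an $\LRA$ cube, and the join $\sqcup$ (from Section~\ref{sec:join}), which I would need to assume computes a least upper bound in $\preceq$.

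For termination and post-condition correctness, I would maintain the loop invariant that the current $(S,V)$ simulates every cube $C$ that has been sampled in some earlier iteration, which follows inductively: if $C$ was processed in a previous iteration then $\hat{\alpha}(C)$ was joined into $(S,V)$, and since $\sqcup$ is an upper bound we still have $C \Vdash_{S} V$, i.e.\ $C \models \gamma(S,V)$. The blocking clause $\Gamma \gets \Gamma \wedge \lnot \gamma(S,V)$ therefore eliminates every model of every previously sampled cube, so two different iterations must select two different cubes of the DNF of $F$. Because the DNF has finitely many cubes, the loop exits after finitely many iterations. When it exits, $\Gamma = F \wedge \lnot \gamma(S,V)$ is unsatisfiable, so $F \models \gamma(S,V)$, which is exactly $F \Vdash_S V$. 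The Skolemization step at line~1 is needed here so that a model $M$ of $\Gamma$ can genuinely be viewed as a concrete transition of $F$ and used to pick a cube of $F$'s DNF.

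For the optimality claim, let $(\widetilde{S}, \widetilde{V})$ be any \VAS{} abstraction of $F$, and let $C_1,\dots,C_k$ be the cubes sampled during the run of the algorithm. Since $F$ is in $\LRA$ and $F \Vdash_{\widetilde{S}} \widetilde{V}$, each cube $C_i$ satisfies $C_i \models \gamma(\widetilde{S}, \widetilde{V})$, so $(\widetilde{S},\widetilde{V})$ simulates $C_i$. By the assumed best-abstraction property of $\hat{\alpha}$ on $\LRA$ cubes, $\hat{\alpha}(C_i) \preceq (\widetilde{S},\widetilde{V})$ for each~$i$. The final $(S,V)$ is the iterated join of $(I_n,\emptyset)$ with $\hat{\alpha}(C_1),\dots,\hat{\alpha}(C_k)$, and since $\sqcup$ computes least upper bounds, repeated application yields $(S,V) \preceq (\widetilde{S},\widetilde{V})$, which is what we needed.

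The step I expect to be genuinely subtle is the termination argument, specifically justifying that the cube $C$ selected from the DNF using the model $M$ of $\Gamma$ really is a cube of $F$'s DNF (so that only finitely many such $C$ exist), and that $M \models C$ together with $C \models \gamma(S,V)$ after the join suffices to remove $M$ from the next $\Gamma$. Everything else is essentially bookkeeping on top of the two properties of $\hat{\alpha}$ and $\sqcup$ that will be proved in Sections~\ref{sec:conjunctive} and~\ref{sec:join}; the real content of the theorem is that these two local primitives can be iterated in an AllSat loop to construct a global best abstraction, and that is precisely what the termination-plus-invariant argument above is designed to show.
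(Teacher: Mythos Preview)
Your proposal is correct and follows essentially the same approach as the paper's proof: the paper also splits into (i) termination, (ii) correctness at termination, and (iii) optimality in the $\LRA$ case, and it relies on exactly the two black boxes you identify (Proposition~\ref{prop:alphahat} for $\hat{\alpha}$ and Proposition~\ref{prop:join} for $\sqcup$). Your invariant-based termination argument and your iterated-least-upper-bound argument for optimality are the same in content as the paper's, which phrases the latter as an explicit induction on loop iterations.
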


The proof of this theorem as well as the proofs to all subsequent theorems, lemmas, and propositions
are in the extended version of this paper \cite{extended}.

\subsection{Abstracting conjunctive transition formulas}\label{sec:conjunctive}

This section shows how to compute a \VAS{} abstraction for a
consistent \textit{conjunctive} formula. When the input formula is in $\LRA$, the computed
\VAS{} abstraction will be a best \VAS{} abstraction of the input formula.  The intuition is that, since
$\LRA{}$ is a convex theory, a best \VAS{} abstraction consists of a
single transition.  For $\LIRA{}$ formulas, our procedure produces a
\VAS{} abstract that is not guaranteed to be best, precisely because
$\LIRA{}$ is not convex.

Let $C$ be consistent, conjunctive transition formula.  Observe that
the set ${\textit{Res}_C \defeq \{ \tuple{\vec{s}, a} : C \models \vec{s} \cdot
  \vec{x'} = a\}}$, which represents linear combinations of variables
that are \textit{reset} across $C$, forms a vector space.  Similarly, the set
$\textit{Inc}_C = \{
\tuple{\vec{s}, a} : C \models \vec{s} \cdot \vec{x'} = \vec{s} \cdot
\vec{x} + a\}$, which
represents linear combinations of variables that are
\textit{incremented} across $C$, forms a vector space.   We compute bases for both $\textit{Res}_C$ and $\textit{Inc}_C$, say $\{
\tuple{\vec{s}_1, a_1}, ..., \tuple{\vec{s}_m,a_m} \}$ and $\{
\tuple{\vec{s}_{m+1}, a_{m+1}}, ..., \tuple{\vec{s}_d,a_d} \}$,
respectively.  We define $\hat{\alpha}(C)$ to be the \VAS{}
abstraction $\hat{\alpha}(C) \defeq (S, \{(\res, \adv)\})$, where
\[ S \defeq \begin{bmatrix} \vec{s}_1\\\vdots\\\vec{s}_d\end{bmatrix} \hspace*{1cm} \res \defeq [\underbrace{0 \cdots 0}_{m\text{ times}} \overbrace{1 \cdots 1}^{(d-m)\text{ times}}\!\!\!]\ \hspace*{1cm} \adv \defeq \begin{bmatrix} a_1\\\vdots\\a_d\end{bmatrix}. \]


\begin{example}
  Let $C$ be the formula $x' = x + y \land y' = 2y \land w' = w \land w = w + 1 \land z' = w$.
   The vector space of resets has basis $\{\tuple{\begin{bmatrix}0 & 0 &
      -1 & 1\end{bmatrix}, 0}\}$ (representing that $z - w$ is reset to 0).
  The vector space of increments has basis
  $\{\tuple{\begin{bmatrix}1 & -1 & 0 & 0\end{bmatrix}, 0},
  \tuple{\begin{bmatrix}0 & 0 & 1 & 0\end{bmatrix}, 0}, 
  \tuple{\begin{bmatrix}0 & 0 & -1 & 1\end{bmatrix}, 1}\}$ (representing
  that the difference $x - y$ does not change, the difference $z - w$ increases by 1, 
  and the variable $w$ does not change).
  A best
  abstraction of $C$ is thus the four-dimensional \VAS{}
  \[ \VASSYM = \left\{
  \left(\begin{bmatrix} 0 \\ 1 \\ 1 \\ 1 \end{bmatrix}, \begin{bmatrix}0 \\ 0
    \\0 \\ 1\end{bmatrix}\right) \right\}, S = \begin{bmatrix}0
      & 0 & -1 & 1\\1 & -1 & 0 & 0\\0 & 0 & 1 & 0\\ 0 & 0 & -1 & 1\end{bmatrix}.
      \]
      In particular,
      notice that since the term $z - w$ is both incremented and
      reset, it is represented by two different dimensions in
      $\hat{\alpha}(C)$.
\end{example}

\begin{restatable}{proposition}{alphahat}
\label{prop:alphahat}
  For any consistent, conjunctive transition formula $C$,
  $\hat{\alpha}(C)$ is a \VAS{} abstraction of $C$.  If $C$ is
  expressed in $\LRA$, then $\hat{\alpha}(C)$ is best.
\end{restatable}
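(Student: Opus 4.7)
The plan is to handle the two claims separately. For the simulation claim, I would verify the entailment $C \models \gamma(S, \VASSYM{})$ by direct unfolding. For each row $i \le m$ of $S$, the pair $\tuple{\vec{s}_i, a_i}$ lies in $\textit{Res}_C$, so $C$ forces $\vec{s}_i \cdot \vec{x}' = a_i$; this matches coordinate $i$ of $\res * S\vec{x} + \adv$ because $\res_i = 0$ and $\adv_i = a_i$. For each row $i > m$, $\tuple{\vec{s}_i, a_i} \in \textit{Inc}_C$ yields $\vec{s}_i \cdot \vec{x}' = \vec{s}_i \cdot \vec{x} + a_i$, matching the $\res_i = 1$ case. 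This step is routine and is not where the real content lies.

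For bestness, assume $C$ is in $\LRA{}$ and let $(\widetilde{S}, \widetilde{\VASSYM{}})$ of abstract dimension $e$ be an arbitrary \VAS{} abstraction of $C$. I need to construct $T \in \mathbb{Q}^{e \times d}$ witnessing $\hat{\alpha}(C) \preceq (\widetilde{S}, \widetilde{\VASSYM{}})$, i.e., $TS = \widetilde{S}$ and $\VASSYM{} \Vdash_T \widetilde{\VASSYM{}}$. The key first step --- and the main obstacle --- is a convexity argument showing that $C$ entails a \emph{single} disjunct of $\gamma(\widetilde{S}, \widetilde{\VASSYM{}})$: each disjunct is a conjunction of linear equalities and hence defines an affine subspace of $\mathbb{Q}^{2n}$, while the models of $C$ form a convex subset of $\mathbb{Q}^{2n}$ since $C$ is a consistent conjunction of $\LRA{}$ atoms. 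I would then appeal to the geometric fact that a convex set contained in a finite union of affine subspaces must lie in one of them, which follows by an elementary argument on line segments (the intersection of a line with a proper affine subspace is at most one point) together with a dimension induction on the number of subspaces.

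Once such a disjunct $(\res, \adv) \in \widetilde{\VASSYM{}}$ is fixed, each row $\widetilde{\vec{s}}_i$ of $\widetilde{S}$ falls into either the $\textit{Res}_C$ case (if $\res_i = 0$, giving $\tuple{\widetilde{\vec{s}}_i, \adv_i} \in \textit{Res}_C$) or the $\textit{Inc}_C$ case (if $\res_i = 1$, giving $\tuple{\widetilde{\vec{s}}_i, \adv_i} \in \textit{Inc}_C$). I would define row $i$ of $T$ to be the coefficient vector expressing $\tuple{\widetilde{\vec{s}}_i, \adv_i}$ in the corresponding basis of $\hat{\alpha}(C)$, padded with zeros in the complementary slots (the increment basis indices for a reset row, and vice versa). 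By construction the spatial parts of these combinations reproduce $\widetilde{\vec{s}}_i$, giving $TS = \widetilde{S}$, and a direct coordinate-wise check --- using that $\adv_i$ is realized by the same linear combination of the $a_j$'s that defines row $i$ of $T$ --- verifies that $T$ carries the unique transition of $\VASSYM{}$ to the transition $(\res, \adv) \in \widetilde{\VASSYM{}}$, yielding $\VASSYM{} \Vdash_T \widetilde{\VASSYM{}}$ as required.
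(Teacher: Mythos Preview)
Your proposal is correct and follows the same approach as the paper: verify the simulation $C \Vdash_S \VASSYM{}$ directly, use convexity of $\LRA{}$ to show that $C$ is simulated by a \emph{single} transformer of $\widetilde{\VASSYM{}}$, and then build $T$ row by row from the $\textit{Res}_C$/$\textit{Inc}_C$ basis coordinates, padded with zeros. The only difference is cosmetic---the paper unrolls the convexity step as an explicit minimal-subset argument where you invoke the general lemma (one small slip to fix: the parenthetical should say ``an affine subspace \emph{not containing the line}'' rather than ``a proper affine subspace'', and the induction is on the number of subspaces, not on dimension).
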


\subsection{Computing least upper bounds}\label{sec:join}
This section shows how to compute least upper bounds w.r.t. the
$\preceq$ order.

By definition of the $\preceq$ order, if $(S,\VASSYM{})$ is an
upper bound of $(S^1,\VASSYM{}^1)$ and $(S^2,\VASSYM{}^2)$, then there
must exist matrices $T^1$ and $T^2$ such that $T^1S^1 = S = T^2S^2$,
$V^1 \Vdash_{T^1} V$, and $V^2 \Vdash_{T^2} V$.  As we shall see, if
$(S,V)$ is a \textit{least} upper bound, then it is completely
determined by the matrices $T^1$ and $T^2$.  Thus, we shift our
attention to computing simulation matrices $T^1$ and $T^2$ that induce
a least upper bound.

In view of the desired equation $T^1S^1 = S = T^2S^2$, let us consider
the constraint $T^1S^1 = T^2S^2$ on two \textit{unknown} matrices
$T^1$ and $T^2$. Clearly, we have $T^1S^1 = T^2S^2$ iff each
$(T^1_i,T^2_i)$ belongs to the set $\mathcal{T} \defeq \{ (\vec{t}^1,
\vec{t}^2) : \vec{t}^1S^1 = \vec{t}^2S^2 \}$.  Observe that
$\mathcal{T}$ is a vector space, so there is a \textit{best} solution
to the constraint $T^1S^1 = T^2S^2$: choose $T^1$ and $T^2$ so that
the set of all row pairs $(T^1_i,T^2_i)$ forms a basis for
$\mathcal{T}$.  In the following, we use $\textit{pushout}(S^1,S^2)$
to denote a function that computes such a \textit{best} $(T^1, T^2)$.

While $\textit{pushout}$ gives a \textit{best} solution to the
equation $T^1S^1 = T^2S^2$, it is not sufficient for the purpose of
computing least upper bounds for \VAS{} abstractions, because $T^1$ and
$T^2$ may not respect the structure of the \VAS{} $V^1$ and $V^2$
(i.e., there may be no \VAS{} $V$ such that $V^1 \Vdash_{T^1} V$ and
$V^2 \Vdash_{T^2} V$).  Thus, we must further constrain our problem by
requiring that $T^1$ and $T^2$ are \textit{coherent} with respect to
$V^1$ and $V^2$ (respectively).

\begin{definition}\label{def:coherence}
  Let $\VASSYM$ be a $d$-dimensional \VAS{}.  We say that $i,j \in
  \{1,...,d\}$ are \textbf{coherent dimensions} of $\VASSYM$ if for
  all transitions $(\res,\adv) \in \VASSYM{}$ we have $r_i = r_j$
  (i.e., every transition of $\VASSYM{}$ that resets $i$ also resets
  $j$ and vice versa).  We denote that $i$ and $j$ are coherent
  dimensions of $\VASSYM{}$ by writing $i \equiv_{\VASSYM{}} j$, and
  observe that $\equiv_{\VASSYM{}}$ forms an equivalence relation on
  $\{1,...,d\}$.  We refer to the equivalence classes of
  $\equiv_{\VASSYM{}}$ as the \textbf{coherence classes} of $\VASSYM{}$.

  A matrix $T \in \mathbb{Q}^{e \times
    d}$ \textbf{is coherent with respect to} $\VASSYM{}$ if and only if each
  of its rows have non-zero values only in the dimensions corresponding to a single coherence class of $\VASSYM{}$.


  \end{definition}

For any $d$-dimensional \VAS{} $\VASSYM{}$ and coherence class $C =
\{c_1,...,c_k\}$ of $\VASSYM{}$, define $\proj{C}$ to be the $k \times
d$ dimensional matrix whose rows are $\vec{e}_{c_1}, ...,
\vec{e}_{c_k}$.  Intuitively, $\proj{C}$ is a projection onto the set
of dimensions in $C$.


Coherence is a necessary and sufficient condition for linear
simulations between \VAS{} in a sense described in
Lemmas~\ref{lem:coherence} and~\ref{lem:image}.

\begin{restatable}{lemma}{coherence}\label{lem:coherence}
  Let $\VASSYM{}^1$ and $\VASSYM{}^2$ be \VAS{} (of dimension $d$ and
  $e$, respectively), and let $T \in \mathbb{Q}^{e \times d}$ be a
  matrix such that $\VASSYM{}^1 \Vdash_T \VASSYM{}^2$.
  Then $T$ must be coherent with respect to $\VASSYM{}^1$.
\end{restatable}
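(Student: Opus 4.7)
The plan is to argue by a covering argument using the finiteness of $\VASSYM{}^2$ together with the fact that $\mathbb{Q}^d$ cannot be written as a finite union of proper affine subspaces. The goal is to show that each transition $(\res,\adv) \in \VASSYM{}^1$ is simulated by a \emph{single} transition of $\VASSYM{}^2$ via $T$ in a very rigid way, and then to read off coherence from that.

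Fix an arbitrary transition $(\res,\adv) \in \VASSYM{}^1$. For every $\vec{u} \in \mathbb{Q}^d$, let $\vec{v} = \res * \vec{u} + \adv$, so that $\vec{u} \rightarrow_{\VASSYM{}^1} \vec{v}$. By assumption $T\vec{u} \rightarrow_{\VASSYM{}^2} T\vec{v}$, so there must exist $(\res',\adv') \in \VASSYM{}^2$ with $T\vec{v} = \res' * T\vec{u} + \adv'$. Writing $D(\res)$ for the diagonal matrix with $\res$ on the diagonal, this is an affine condition on $\vec{u}$:
\[
\bigl(T D(\res) - D(\res')\,T\bigr)\vec{u} \;=\; \adv' - T\adv.
\]
For each fixed $(\res',\adv') \in \VASSYM{}^2$, the set of $\vec{u}$ satisfying this equation is an affine subspace of $\mathbb{Q}^d$. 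Since $\VASSYM{}^2$ is finite and these finitely many affine subspaces cover $\mathbb{Q}^d$, at least one of them must be all of $\mathbb{Q}^d$. Hence there is a single $(\res',\adv') \in \VASSYM{}^2$ for which the equation holds identically in $\vec{u}$, which forces $T D(\res) = D(\res')\,T$ (and $\adv' = T\adv$).

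Reading $T D(\res) = D(\res') T$ entrywise gives $T_{ij}\, r_j = r'_i\, T_{ij}$ for every $i,j$. Thus for each row $i$ of $T$ and each $j$ with $T_{ij} \neq 0$, we must have $r_j = r'_i$. In particular, any two indices $j_1,j_2$ with $T_{i,j_1},T_{i,j_2}$ both nonzero satisfy $r_{j_1} = r_{j_2}$. Since this argument was carried out for an \emph{arbitrary} transition $(\res,\adv) \in \VASSYM{}^1$, the indices $\{j : T_{ij} \neq 0\}$ agree on their reset bit across every transition of $\VASSYM{}^1$, i.e.\ they all lie in a single coherence class of $\VASSYM{}^1$. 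This is precisely the definition of $T$ being coherent with respect to $\VASSYM{}^1$.

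The main obstacle is the covering step: one must justify that a finite union of proper affine subspaces of $\mathbb{Q}^d$ cannot exhaust $\mathbb{Q}^d$. This is standard (each proper affine subspace lies in the zero set of a nontrivial affine polynomial, and over an infinite field a finite product of nonzero polynomials is nonzero), but it is the only nontrivial ingredient; everything else is linear algebra. A minor bookkeeping point is that one should handle the two cases $T_{i,j_1}\neq 0$ and $T_{i,j_2}\neq 0$ together rather than trying to deduce information from coordinates where $T_i$ vanishes, since the coherence requirement places no constraint on those coordinates.
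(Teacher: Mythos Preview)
Your proof is correct, and it takes a genuinely different route from the paper's.

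The paper argues by contradiction: assuming $T$ has a row $i$ with nonzero entries in two incoherent columns $j,k$, it picks a transition $(\res,\adv)\in\VASSYM{}^1$ with $r_j\neq r_k$ and explicitly constructs, for every $z\in\mathbb{Q}$, a vector $\vec{u}$ with $T_i\vec{u}=0$ and $T_i(\res*\vec{u}+\adv)=z$. This forces $0\rightarrow_{\VASSYM{}^2}z$ in the projection to the $i$th coordinate for all $z$, contradicting the finite branching of a \VAS{}. You instead fix $(\res,\adv)\in\VASSYM{}^1$, observe that the set of $\vec{u}$ simulated by each fixed $(\res',\adv')\in\VASSYM{}^2$ is an affine subspace, use the impossibility of covering $\mathbb{Q}^d$ by finitely many proper affine subspaces to conclude that one such subspace is everything, and read off the matrix identity $T D(\res)=D(\res')T$.

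What each approach buys: the paper's argument is fully self-contained and elementary, with no appeal to the covering fact; it just exhibits the contradiction directly. Your argument is cleaner structurally and in fact proves more than the lemma asks: the identity $T D(\res)=D(\res')T$ together with $\adv'=T\adv$ says precisely that $(\rim{T}{\res},T\adv)=(\res',\adv')\in\VASSYM{}^2$, which is exactly the relation underlying $\image{\VASSYM{}^1}{T}\subseteq\VASSYM{}^2$. In other words, your proof of Lemma~\ref{lem:coherence} already contains the forward direction of Lemma~\ref{lem:image} as a byproduct. The one external ingredient you need is the covering fact, which you correctly identify and justify.
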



Let $\VASSYM{}$ be a $d$-dimensional \VAS{} and let $T \in
\mathbb{Q}^{e \times d}$ be a matrix that is coherent with respect to
$\VASSYM{}$ and has no zero rows.  Then there is a (unique)
$e$-dimensional \VAS{} $\image{\VASSYM{}}{T}$ such that its transition
relation $\rightarrow_{\image{\VASSYM{}}{T}}$ is equal to
$\{(T\vec{u},T\vec{v}) : \vec{u} \rightarrow_{\VASSYM} \vec{v}\}$ (the
image of $\VASSYM{}$'s transition relation under $T$).  This \VAS{}
can be defined by:
\[\image{\VASSYM{}}{T} \defeq \{
  (\rim{T}{\res}, T\adv) : (\res,\adv) \in \VASSYM{} \} \] where
$\rim{T}{\res}$ is the reset vector $\res$ translated along $T$ (i.e.,
$(\rim{T}{\res})_i = r_j$ where $j$ is an arbitrary choice among
dimensions for which $T_{ij}$ is non-zero---at least one such $j$
exists because the row $T_i$ is non-zero by assumption, and the choice
of $j$ is arbitrary because all such $j$ belong to the same coherence
class by the assumption that $T$ is coherent with respect to
$\VASSYM{}$).


\begin{restatable}{lemma}{lemimage}
\label{lem:image}
Let $\VASSYM{}$ be a $d$-dimensional \VAS{} and let $T \in
\mathbb{Q}^{e \times d}$ be a matrix that is coherent with respect to $\VASSYM{}$
and has no zero rows.  Then the transition relation of
$\image{\VASSYM{}}{T}$ is the image
of $\VASSYM{}$'s transition relation under $T$ (i.e.,
$\rightarrow_{\image{\VASSYM{}}{T}}$ is equal to $\{(T\vec{u},T\vec{v})
  : \vec{u} \rightarrow_{\VASSYM} \vec{v}\}$).
\end{restatable}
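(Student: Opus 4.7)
The plan is to reduce the claimed set equality to a single algebraic identity and then derive both inclusions from it. The identity is
\begin{equation*}
T(\res * \vec{u} + \adv) \;=\; (\rim{T}{\res}) * (T\vec{u}) + T\adv
\end{equation*}
for every $\vec{u} \in \mathbb{Q}^d$ and every $(\res, \adv) \in \VASSYM$. By linearity of $T$, this further reduces to the purely multiplicative statement $T(\res * \vec{u}) = (\rim{T}{\res}) * (T\vec{u})$, which I would verify coordinate-by-coordinate.

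Fix a row index $i$. The $i$-th coordinate of $T(\res * \vec{u})$ is $\sum_j T_{ij}\, r_j\, u_j$. Since $T_i$ is non-zero by hypothesis, the definition of $\rim{T}{\res}$ picks some $j^\star$ with $T_{ij^\star} \neq 0$ and sets $(\rim{T}{\res})_i = r_{j^\star}$. Coherence of $T$ with respect to $\VASSYM$ forces every other $j$ with $T_{ij} \neq 0$ to lie in the same coherence class of $\VASSYM$ as $j^\star$, and Definition~\ref{def:coherence} then guarantees $r_j = r_{j^\star}$ for the particular reset vector $\res$ at hand. Factoring out this common scalar yields $(\rim{T}{\res})_i \sum_j T_{ij} u_j = (\rim{T}{\res})_i \cdot (T\vec{u})_i$, matching the $i$-th coordinate of the right-hand side.

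With the identity in hand, both inclusions are essentially unpackings of definitions. For $\supseteq$, given $\vec{u} \rightarrow_{\VASSYM} \vec{v}$ via $(\res, \adv) \in \VASSYM$, the identity yields $T\vec{v} = (\rim{T}{\res}) * (T\vec{u}) + T\adv$, and $(\rim{T}{\res}, T\adv) \in \image{\VASSYM}{T}$ by construction, so $T\vec{u} \rightarrow_{\image{\VASSYM}{T}} T\vec{v}$. For $\subseteq$, I would unpack a transition $\vec{u}' \rightarrow_{\image{\VASSYM}{T}} \vec{v}'$ as $\vec{v}' = (\rim{T}{\res}) * \vec{u}' + T\adv$ for some $(\res, \adv) \in \VASSYM$, select a preimage $\vec{u}$ with $T\vec{u} = \vec{u}'$, and set $\vec{v} = \res * \vec{u} + \adv$; the identity then gives $T\vec{v} = \vec{v}'$ as required. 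The main technical content is concentrated in the coordinate calculation above: it is precisely the coherence hypothesis that permits the nonlinear Hadamard product with $\res$ to commute with the linear map $T$ after replacing $\res$ by $\rim{T}{\res}$. Everything else is bookkeeping.
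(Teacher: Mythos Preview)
Your argument is the paper's argument, only more carefully written: the paper asserts the identity $T(\res*\vec{u}+\adv)=(\rim{T}{\res})*(T\vec{u})+T\adv$ without the coordinate computation you supply, and then dismisses the reverse inclusion with ``the other direction is symmetric,'' whereas you actually spell it out.

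One caveat on that reverse inclusion. The step ``select a preimage $\vec{u}$ with $T\vec{u}=\vec{u}'$'' presupposes that $T:\mathbb{Q}^d\to\mathbb{Q}^e$ is surjective, i.e.\ that $T$ has full row rank. Coherence together with ``no zero rows'' does not imply this (e.g.\ $T=\begin{smallmatrix}1&0\\1&0\end{smallmatrix}$ over a \VAS{} with a single coherence class). Without surjectivity there are states $\vec{u}'\in\mathbb{Q}^e$ outside $\mathrm{range}(T)$ from which $\image{\VASSYM}{T}$ still has outgoing transitions, and those transitions cannot lie in $\{(T\vec{u},T\vec{v}):\vec{u}\rightarrow_{\VASSYM}\vec{v}\}$. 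The paper's ``symmetric'' hides exactly the same issue, so your write-up is no less complete than the original; but if you want the $\subseteq$ direction to go through as stated you should either add full row rank to the hypotheses or note that it holds for the matrices $T$ actually produced by Algorithm~\ref{alg:join}.
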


Finally, prior to describing our least upper bound algorithm, we must define a
technical condition that is both assumed and preserved by the
procedure:
\begin{definition}
  A \VAS{} abstraction $(S,\VASSYM{})$ is \textbf{normal} if there is no non-zero vector
  $\vec{z}$ that is coherent with respect to $\VASSYM{}$ such
  that $\vec{z}S = 0$ (i.e., the rows of $S$ that correspond
  to any coherence class of $\VASSYM{}$ are linearly independent).
\end{definition}
Intuitively, a \VAS{} abstraction that is \textit{not} normal contains
information that is either inconsistent or redundant.

\begin{algorithm}[t]
  \SetKwInOut{Input}{input}\SetKwInOut{Output}{output}
  \Input{Normal \VAS{} abstractions $(S^1,\VASSYM{}^1)$ and $(S^2,\VASSYM{}^2)$ of equal concrete dimension}
  \Output{Least upper bound (w.r.t. $\preceq$) of $(S^1,\VASSYM{}^2)$ and $(S^1,\VASSYM{}^2)$}
  $S, T^1, T^2 \gets$ empty matrices\;
  \ForEach{\text{coherence class } $C^1 \text{ of } \VASSYM{}^1$}{
  \ForEach{\text{coherence class } $C^2 \text{ of } \VASSYM{}^2$}{
  $(U^1, U^2) \gets \textit{pushout}(\proj{C^1}{S^1},  \proj{C^2}{S^2})$\;
  $S \gets \begin{bmatrix} S\\ U^1 \proj{C^1}{S^1} \end{bmatrix}$;
$T^1 \gets \begin{bmatrix}
  T^1\\
  U^1 \proj{C^1}
\end{bmatrix}$; $T^2 \gets \begin{bmatrix}
  T^2\\
  U^2 \proj{C^2}
\end{bmatrix}$\;
}
}
$\VASSYM{} \gets \image{\VASSYM{}^1}{T^1} \cup \image{\VASSYM{}^2}{T^2}$\;
\Return{$(S, \VASSYM{})$}
\caption{$(S^1,\VASSYM{}^1) \sqcup (S^2,\VASSYM{}^2)$ \label{alg:join}}
\end{algorithm}




We now present a strategy for computing least upper bounds of \VAS{}
abstractions.  Fix (normal) \VAS{} abstractions $(S^1,\VASSYM{}^1)$
and $(S^2,\VASSYM^2{})$.  Lemmas~\ref{lem:coherence}
and~\ref{lem:image} together show that a pair of matrices
$\widetilde{T}^1$ and $\widetilde{T}^2$ induce an upper bound (not
necessarily \textit{least}) on $(S^1,\VASSYM{}^1)$ and
$(S^2,\VASSYM^2{})$ exactly when the following conditions hold: (1)
$\widetilde{T}^1S^1 = \widetilde{T}^2S^2$, (2) $\widetilde{T}^1$ is
coherent w.r.t. $V^1$, (3) $\widetilde{T}^2$ is coherent w.r.t. $V^2$,
and (4) neither $\widetilde{T}^1$ nor $\widetilde{T}^2$ contain zero
rows.  The upper bound induced by $\widetilde{T}^1$ and
$\widetilde{T}^2$ is given by
\[
\textit{ub}(\widetilde{T}^1, \widetilde{T}^2) \defeq (\widetilde{T}^1 S^1, \image{\VASSYM{}^1}{\widetilde{T}^1} \cup \image{\VASSYM{}^2}{T^2})\ .
\]
We now consider how to compute a \textit{best} such $\widetilde{T}^1$ and $\widetilde{T}^2$.
Observe that conditions (1),(2), and (3) hold exactly when for each
row $i$, $(\widetilde{T}^1_i, \widetilde{T}^2_i)$ belongs to the set
\[ \mathcal{T} \defeq \{ (\vec{t}^1, \vec{t}^2) :
\vec{t}^1S^1 = \vec{t}^2S^2 \land \vec{t}^1 \textit{coherent w.r.t.} V^1 \land \vec{t}^1 \textit{coherent w.r.t.} V^2 \}\ .
\]
Since a row vector $\vec{t}^i$ is coherent w.r.t $V^i$ iff its
non-zero positions belong to the same coherence class of $V^i$
(equivalently, $\vec{t}^i = \bar{\vec{t}}^i \proj{C^i}$ for some coherence
class $C^i$ and vector $\bar{\vec{t}}^i$), we have $\mathcal{T} =
\bigcup_{C^1,C^2} \mathcal{T}(C^1,C^2)$, where
the union is over all coherence classes $C^1$ of $V^1$ and
$C^2$ of $V^2$, and 
\[ \mathcal{T}(C^1,C^2) \defeq \{ (\bar{\vec{t}}^{i,1} \proj{C^1}, \bar{\vec{t}}^{i,1}\proj{C^2}) : \bar{\vec{t}}^{i,1} \proj{C^1}S^1 = \bar{\vec{t}}^{i,1}\proj{C^2}S^2 \}\ .
\]
Observe that each $\mathcal{T}(C^1,C^2)$ is a vector space, so we
can compute a pair of matrices $T^1$ and $T^2$ such that the rows
$(T^1_i, T^2_i)$ collectively form a basis for each
$\mathcal{T}(C^1,C^2)$.  Since $(S^1,\VASSYM{}^1)$ and
$(S^2,\VASSYM^2{})$ are normal (by assumption), neither $T^1$ nor
$T^2$ may contain zero rows (condition (4) is satisfied).  Finally, we
have that $\textit{ub}(T^1,T^2)$ is the \textit{least} upper bound of
$(S^1,\VASSYM{}^1)$ and $(S^2,\VASSYM^2{})$.  Algorithm~\ref{alg:join}
is a straightforward realization of this strategy.

\begin{restatable}{proposition}{join}
\label{prop:join}
  Let $(S^1,\VASSYM{}^1)$ and $(S^2,\VASSYM{}^2)$ be normal 
  \VAS{}
  abstractions of equal concrete dimension.  Then the \VAS{} abstraction
  $(S,\VASSYM{})$ computed by Algorithm~\ref{alg:join} is normal
  and is a least upper bound of $(S^1,\VASSYM{}^2)$ and
  $(S^2,\VASSYM{}^2)$.
\end{restatable}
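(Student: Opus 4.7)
The plan is to verify that Algorithm~\ref{alg:join} produces exactly the construction $\textit{ub}(T^1,T^2)$ described in the text, then leverage the four-condition reasoning already spelled out. First I would check that the matrices $T^1$ and $T^2$ built by the algorithm satisfy the four conditions for inducing an upper bound. Condition (1) $T^1 S^1 = T^2 S^2$ holds row-by-row by the defining property of $\textit{pushout}$. Conditions (2) and (3) hold because each row of $T^1$ has the form $\bar{\vec{t}} \proj{C^1}$ for a single coherence class $C^1$ of $V^1$, so its nonzero entries lie within $C^1$ (and symmetrically for $T^2$). For condition (4), suppose a row $T^1_i = \bar{\vec{t}}^1 \proj{C^1}$ were zero. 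Since the row pair $(T^1_i, T^2_i)$ is a basis element of $\mathcal{T}(C^1,C^2)$, it is nonzero, so $\bar{\vec{t}}^2 \proj{C^2} \neq 0$; but the pushout constraint then forces $\bar{\vec{t}}^2 \proj{C^2} S^2 = 0$, contradicting normality of $(S^2, V^2)$. Hence $(T^1 S^1,\, \image{V^1}{T^1} \cup \image{V^2}{T^2})$ is an upper bound, and this coincides with the algorithm's output.

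Next, for leastness, I would show that the full collection of row pairs $\{(T^1_i, T^2_i)\}_i$ forms a basis for $\mathcal{T}$, not merely for each individual $\mathcal{T}(C^1,C^2)$. Every element $(\vec{t}^1,\vec{t}^2) \in \mathcal{T}$ has each component coherent w.r.t. its respective \VAS{}, hence its support lies in some single pair of coherence classes $(C^1,C^2)$, placing it in the span of one block. Conversely, basis vectors from distinct $(C^1,C^2)$ blocks have disjoint supports in each component, so the blocks are jointly independent. Now given any other upper bound $(\widetilde{S}, \widetilde{V})$ induced by simulations $\widetilde{T}^1, \widetilde{T}^2$ satisfying conditions (1)--(4), each row pair $(\widetilde{T}^1_i, \widetilde{T}^2_i)$ belongs to $\mathcal{T}$, so there exists a matrix $U$ with $\widetilde{T}^1 = UT^1$ and $\widetilde{T}^2 = UT^2$. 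This yields $\widetilde{S} = U S$ and, applying Lemma~\ref{lem:image} to the composed simulations, $V \Vdash_U \widetilde{V}$, establishing $(S,V) \preceq (\widetilde{S},\widetilde{V})$.

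The most delicate step is preserving normality. Suppose $\vec{z}$ is nonzero and coherent w.r.t. $V$ with $\vec{z} S = 0$. I would first argue that the coherence classes of $V = \image{V^1}{T^1} \cup \image{V^2}{T^2}$ refine the partition of rows into $(C^1,C^2)$ blocks: a transition from $V^1$ resets the dimension of $V$ corresponding to block $(C^1,C^2)$ uniformly across that block (with the common value $r^1_{C^1}$), so two rows in different $(C^1,C^2)$ blocks are witnessed non-coherent by some transition from $V^1$ or $V^2$. Hence $\vec{z}$ has support confined to one block; writing its restriction as $\vec{z}'$, the equation $\vec{z} S = 0$ becomes $\vec{z}' U^1 \proj{C^1} S^1 = 0$. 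The vector $\vec{z}' U^1 \proj{C^1}$ is supported on $C^1$ and thus coherent w.r.t. $V^1$, so normality of $(S^1,V^1)$ forces $\vec{z}' U^1 = 0$; symmetrically $\vec{z}' U^2 = 0$. Then $\vec{z}'$ combines the basis pairs $(U^1_i, U^2_i)$ to zero, so $\vec{z}' = 0$ by linear independence of the pushout basis, a contradiction.

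The hard part will be this last argument about coherence classes of the union \VAS{}---specifically verifying that the block structure of $S$ aligns with the coherence partition of $V$, which is what makes normality propagate and what ultimately lets the normality hypotheses on the two inputs be used separately on each block.
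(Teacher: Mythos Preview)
Your proposal is correct and follows essentially the same route as the paper's proof: both establish the upper-bound property via the four conditions, prove leastness by producing a matrix $U$ with $UT^1=\widetilde{T}^1$ and $UT^2=\widetilde{T}^2$ (the paper constructs this row by row rather than invoking a basis for $\mathcal{T}$), and derive normality from the identification of the coherence classes of $V$ with the $(C^1,C^2)$ blocks. One small imprecision: basis vectors from blocks $(C^1,C^2)$ and $(C^1,C'^2)$ do \emph{not} have disjoint support in the first component, but this claim is not load-bearing---spanning alone suffices for leastness, and your normality argument only uses independence \emph{within} a single pushout block, which holds by construction.
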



\section{Control Flow and \VASS{}} \label{sec:vass}

In this section, we give a method for improving the precision of our
loop summarization technique by using \VASS{}; that is, \VAS{}
extended with control states.  While \VAS{}s over-approximate control
flow using non-determinism, \VASS{}s allow us to analyze phenomena
such as oscillating and multi-phase loops.

We begin with an example that demonstrates the precision gained by
\VASS{}.  The loop in Figure \ref{fig:osc-loop} oscillates between (1)
incrementing variable $i$ by $1$ and (2) incrementing both variables
$i$ and $x$ by $1$.  Suppose that we wish to prove that, starting with
the configuration $x = 0 \land i = 1$, the loop maintains the
invariant that $2x \leq i$.  The (best) \VAS{} abstraction of the
loop, pictured in Figure~\ref{fig:osc-vasr}, over-approximates the
control flow of the loop by treating the conditional branch in the
loop as a non-deterministic branch. This over-approximation may
violate the invariant $2x \leq i$ by repeatedly executing the path
where both variables are incremented.  On the other hand, the \VASS{}
abstraction of the loop pictured in Figure~\ref{fig:cfg} captures the
understanding that the loop must oscillate between the two paths.  The
loop summary obtained from the reachability relation of this \VASS{}
is powerful
enough to prove the invariant $2x \leq i$ holds (under the precondition
$x = 0 \land i = 1$).

\begin{figure}
  \centering
\begin{subfigure}[b]{3cm}
  \textbf{int} \texttt{x = 0; i = 1}\\
  \textbf{while} \texttt{(*)} \textbf{do}\\
  \hspace*{10pt}\textbf{if} \texttt{i\%2 == 0} \textbf{then}\\
  \hspace*{20pt}\texttt{i := i + 1}\\
  \hspace*{10pt}\textbf{else}\\
  \hspace*{20pt}\texttt{i := i + 1}\\
  \hspace*{20pt}\texttt{x := x + 1}\\
  \subcaption{\label{fig:osc-loop}Oscillating loop}
\end{subfigure}
\begin{subfigure}[b]{4cm}
    \centering
    $\left\{ \begin{array}{l}\begin{bmatrix}i\\x\end{bmatrix} \mapsto \begin{bmatrix}i+1\\x+1\end{bmatrix},\\[10pt] \begin{bmatrix}i\\x\end{bmatrix} \mapsto \begin{bmatrix}i+1\\x\end{bmatrix}\end{array}\right\}$
  \subcaption{\label{fig:osc-vasr}\VAS{} abstraction.}
\end{subfigure}
\begin{subfigure}[b]{5cm}
\centering
\begin{tikzpicture}[>=stealth]
\node[ellipse,draw](1){\texttt{i\%2 == 0}};
\node[ellipse,draw,right of=1,node distance=3cm](2){\texttt{i\%2 == 1}};
\draw[->](1) to [out=90,in=90]node[fill=white,yshift=-5pt]{$\begin{bmatrix}i\\x\end{bmatrix}\mapsto \begin{bmatrix}i+1\\x\end{bmatrix}$}(2);
\draw[->](2) to [out=270,in=270]node[fill=white,yshift=5pt]{$\begin{bmatrix}i\\x\end{bmatrix}\mapsto \begin{bmatrix}i+1\\x+1\end{bmatrix}$}(1);
\end{tikzpicture}
  \subcaption{\label{fig:cfg}\VASS{} abstraction.}
\end{subfigure}
\caption{\label{fig:vassexamp}An oscillating loop and its representation as a \VAS{} and \VASS{}.}
\end{figure}

 

\subsection{Technical details}

In the following, we give a method for over-approximating the
transitive closure of a transition formula $F(\vec{x},\vec{x}')$ using
a \VASS{}.  We start by defining \textit{predicate \VASS{}}, a
variation of \VASS{} with control states that correspond to disjoint
state predicates (where the states intuitively belong to the
transition formula $F$ rather than the \VASS{} itself).  We extend
linear simulations and best abstractions to predicate \VASS{}, and give
an algorithm for synthesizing best predicate \VASS{} abstractions (for
a given set of predicates).  Finally, we give an end-to-end algorithm
for over-approximating the transitive closure of a transition formula.

\begin{definition}  \label{def:npred}
  A \textbf{predicate \VASS{}} over $\vec{x}$ is a \VASS{} $\SYMVASS{}
  = (P, E)$, such that each control state is a predicate over the
  variables $\vec{x}$ and the predicates in $P$ are pairwise
  inconsistent (for all $p \neq q \in P$, $p \land q$ is
  unsatisfiable).
\end{definition}

We extend linear simulations to predicate \VASS{} as follows:
\begin{itemize}
\item Let $F(\vec{x},\vec{x}')$ be an $n$-dimensional transition formula and let $\SYMVASS{} = (P, E)$ be an $m$-dimensional
  \VASS{} over $\vec{x}$.  We say that a linear
  transformation $S : \mathbb{Q}^{m \times n}$
  is a
  linear simulation from $F$ to $\SYMVASS{}$ if for all $\vec{u},\vec{v}
  \in \mathbb{Q}^n$ such that $\vec{u} \rightarrow_F \vec{v}$,
 (1) there is a (unique) $p \in P$ such that $p(\vec{u})$ is valid
  (2) there is a (unique) $q \in P$ such that $q(\vec{v})$ is valid, and (3)
  $(p, S\vec{u}) \rightarrow_{\SYMVASS{}} (q, S\vec{v})$.
\item Let $\SYMVASS{}^1 = (P^1, E^1)$ and $\SYMVASS{}^2 = (P^2, E^2)$ be predicate \VASS{}s over $\vec{x}$ (for some $\vec{x}$) of dimensions $d$ and $e$, respectively.
  We say that a linear transformation $S : \mathbb{Q}^{e \times d}$
   is a linear simulation from $\SYMVASS{}^1$ to
  $\SYMVASS{}^2$ if for all $p^1,q^1 \in P^1$ and for all $\vec{u},\vec{v} \in
  \mathbb{Q}^d$ such that $(p^1, \vec{u}) \rightarrow_{\SYMVASS{}^1} (q^1,
  \vec{v})$, there exists (unique) $p^2, q^2 \in P^2$
  such that (1) $(p^2, S\vec{u}) \rightarrow_{\SYMVASS{}^2}
  (q^2,S\vec{v})$, (2) $p^1 \models p^2$, and (3) $q^1 \models q^2$.
\end{itemize}


We define a \VASS{} abstraction over $\vec{x} = x_1,...,x_n$ to be a pair $(S,\SYMVASS{})$ consisting of
a rational matrix $S \in \mathbb{Q}^{d \times n}$ and a
predicate \VASS{} of dimension $d$ over $\vec{x}$.  We extend the
simulation preorder $\preceq$ to \VASS{} abstractions in the natural
way.  Extending the definition of ``best'' abstractions requires more
care, since we can always find a ``better'' \VASS{} abstraction
(strictly smaller in $\preceq$ order) by using a finer set of predicates.  However, if we consider only predicate \VASS{} that share
the same set of control states, then best abstractions do exist and can be computed using
Algorithm~\ref{alg:vasscore}.

\begin{algorithm}
\SetKwInOut{Input}{input}\SetKwInOut{Output}{output}
\Input{Transition formula $F(\vec{x},\vec{x}')$, set of pairwise-disjoint predicates $P$ over $\vec{x}$ such that for all $\vec{u},\vec{v}$ with $\vec{u} \rightarrow_F \vec{v}$, there exists $p,q \in P$ with $p(\vec{u})$ and $q(\vec{v})$ both valid}
\Output{Best \VASS{} abstraction of $F$ with control states $P$}
For all $p, q \in P$, let $(S_{p,q},\VASSYM_{p,q}) \gets \texttt{abstract-VASR}(p(\vec{x})  \land F(\vec{x},\vec{x}') \land q(\vec{x}'))$\;
$(S,V) \gets$  least upper bound of all $(S_{p,q},\VASSYM_{p,q})$\;
For all $p, q \in P$, let $T_{p,q} \gets$ the simulation matrix from $(S_{p,q},\VASSYM_{p,q})$ to $(S,V)$\;
$E = \{ (p, \res, \adv, q) : p, q \in P, (\res,\adv) \in \image{\VASSYM_{p,q}}{T_{p,q}} \}$\;
\Return{$(S, (P, E))$}
\caption{\texttt{abstract-VASRS}$(F,P)$} \label{alg:vasscore}
\end{algorithm}

Algorithm~\ref{alg:vasscore} works as follows: first, for each pair of
formulas $p, q \in P$, compute a best \VAS{} abstraction of the
formula $p(\vec{x}) \wedge F(\vec{x},\vec{x}') \wedge q(\vec{x}')$ and
call it $(S_{p,q},\VASSYM{}_{p,q})$. $(S_{p,q},\VASSYM{}_{p,q})$
over-approximates the transitions of $F$ that begin in a program state
satisfying $p$ and end in a program state satisfying $q$.  Second, we
compute the least upper bound of all \VAS{} abstractions
$(S_{p,q},\VASSYM{}_{p,q})$ to get a \VAS{} abstraction $(S,V)$ for
$F$.  As a side-effect of the least upper bound computation, we obtain
a linear simulation $T_{p,q}$ from $(S_{p,q},\VASSYM_{p,q})$ to
$(S,V)$ for each $p,q$. A best \VASS{} abstraction of $F(\vec{x},\vec{x}')$ with
control states $P$ has $S$ as its
simulation matrix and has the image of $V_{p,q}$ under $T_{p,q}$
as the edges from $p$ to $q$.



\begin{restatable}{proposition}{vass}
\label{prop:vass}
  Let $F(\vec{x},\vec{x}')$ be a transition formula and let $P$ be a set
  of pairwise inconsistent control states over $\vec{x}$ such that for each 
  transition $\vec{u} \rightarrow_F \vec{v}$, there exists a control states $p,q \in P$
  such that $\vec{u} \models p$ and $\vec{v} \models q$.
  Algorithm~\ref{alg:vasscore} computes a predicate \VASS{} abstraction of $F$ with control states $P$.
  Moreover, if $F$ is in $\LRA$, algorithm~\ref{alg:vasscore} computes a best predicate \VASS{} abstraction
  of $F$ with control states $P$.
\end{restatable}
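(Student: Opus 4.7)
The plan is to establish soundness and, under the LRA hypothesis, optimality, in turn.

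For soundness, I would fix any transition $\vec{u} \rightarrow_F \vec{v}$. By the disjointness and covering hypotheses on $P$, there are unique $p, q \in P$ with $\vec{u} \models p$ and $\vec{v} \models q$, so $(\vec{u}, \vec{v})$ is also a transition of $F_{p,q} \defeq p(\vec{x}) \land F(\vec{x},\vec{x}') \land q(\vec{x}')$. Applying Theorem~\ref{thm:vas-abstraction} to $F_{p,q}$ gives $S_{p,q}\vec{u} \rightarrow_{V_{p,q}} S_{p,q}\vec{v}$, and the simulation matrix $T_{p,q}$ obtained from the join satisfies $T_{p,q} S_{p,q} = S$ and $V_{p,q} \Vdash_{T_{p,q}} V$. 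Lemma~\ref{lem:image} then yields $S\vec{u} \rightarrow_{\image{V_{p,q}}{T_{p,q}}} S\vec{v}$, and by construction of $E$ the witnessing transformer appears as an edge from $p$ to $q$, which is precisely $(p, S\vec{u}) \rightarrow_{(P,E)} (q, S\vec{v})$.

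For optimality in the LRA case, I would take an arbitrary predicate \VASS{} abstraction $(S', (P, E'))$ of $F$ with the same control states $P$, and define $V'_{p,q} \defeq \{(\res, \adv) : (p, \res, \adv, q) \in E'\}$ for each pair. Unpacking the definition of \VASS{} abstraction together with the pairwise inconsistency of $P$ shows that $F_{p,q} \Vdash_{S'} V'_{p,q}$, i.e., $(S', V'_{p,q})$ is a \VAS{} abstraction of $F_{p,q}$. Since $F_{p,q}$ is in LRA, the ``best'' clause of Theorem~\ref{thm:vas-abstraction} produces a simulation matrix $T'_{p,q}$ witnessing $(S_{p,q}, V_{p,q}) \preceq (S', V'_{p,q})$. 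Consequently $(S', \bigcup_{p,q} V'_{p,q})$ is an upper bound of the family $\{(S_{p,q}, V_{p,q})\}_{p,q}$, and Proposition~\ref{prop:join} yields a matrix $T$ with $TS = S'$ and $V \Vdash_T \bigcup_{p,q} V'_{p,q}$.

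The main obstacle is promoting $T$ from a \VAS{}-simulation to a \VASS{}-simulation: each edge $(p, \res, \adv, q) \in E$ must be matched by an edge of $E'$ with the \emph{same} source and target, rather than an arbitrary transformer drawn from some $V'_{p',q'}$. I would address this via a factorization property of the join: by choosing $T$ so that $T T_{p,q} = T'_{p,q}$ for every pair, the image under $T$ of a transformer $(\rim{T_{p,q}}{\res_0}, T_{p,q} \adv_0) \in V$ coincides with the image under $T'_{p,q}$ of $(\res_0, \adv_0) \in V_{p,q}$, which lies in $V'_{p,q}$ by $V_{p,q} \Vdash_{T'_{p,q}} V'_{p,q}$. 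Thus the transformer of $V$ underlying an edge from $p$ to $q$ in $E$ is matched by a transformer of $V'_{p,q}$, hence by an edge of $E'$ from $p$ to $q$, establishing $(S, (P, E)) \preceq (S', (P, E'))$ and the bestness claim.
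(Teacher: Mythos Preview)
Your proposal is correct and follows essentially the same route as the paper: reduce soundness to the \VAS{} abstractions $(S_{p,q},V_{p,q})$ of $F_{p,q}$, and for optimality construct a single $T$ satisfying the factorization $T\,T_{p,q}=T'_{p,q}$ for every pair, which forces edges from $p$ to $q$ in $E$ to land in $V'_{p,q}$. The paper discharges that factorization by a row-by-row coherence-class argument that replays the proof of Proposition~\ref{prop:join} in the $|P\times P|$-ary setting, which is exactly the step your last paragraph identifies and defers.
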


We now describe \texttt{iter-VASRS} (Algorithm~\ref{alg:clcomp}),
which uses \VASS{} to over-approximate the transitive closure of
transition formulas.  Towards our goal of \textit{predictable} program
analysis, we desire the analysis to be \textit{monotone} in the sense
that if $F$ and $G$ are transition formulas such that $F$ entails $G$,
then $\texttt{iter-VASRS}(F)$ entails $\texttt{iter-VASRS}(G)$.  A
sufficient condition to guarantee monotonicity of the overall analysis
is to require that the set of control states that we compute for $F$ is at least as fine as
the set of control states we compute for $G$.  We can achieve this by making the set of
control states $P$ of input transition formula $F(\vec{x},\vec{x}')$ equal to the set of
connected regions of the topological closure of $\exists \vec{x}'. F$
(lines~\ref{ln:start}-\ref{ln:end}).  Note that this set of predicates
may fail the contract of $\texttt{abstract-VASRS}$: there may exist a
transition $\vec{u} \rightarrow_F \vec{v}$ such that $\vec{v}
\not\models \bigvee P$ (this occurs when there is a state of $F$ with
no outgoing transitions). As a result, $(S,\mathcal{V}) =
\texttt{abstract-VASRS}(F,P)$ does not necessarily approximate $F$;
however, it \textit{does} over-approximate $F \land \bigvee P(\vec{x}')$. An
over-approximation of the transitive closure of $F$ can easily be
obtained from $\reach{\mathcal{V}}(S\vec{x},S\vec{x}')$ (the
over-approximation of the transitive closure of $F \land \bigvee P(\vec{x}')$
obtained from the \VASS{} abstraction ($S,\mathcal{V}$)) by
sequentially composing with the disjunction of $F$ and the identity relation
(line~\ref{ln:fixup}).

\begin{algorithm}
\SetKwInOut{Input}{input}\SetKwInOut{Output}{output}
\Input{Transition formula $F(\vec{x},\vec{x}')$}
\Output{Over-approximation of the transitive closure of $F$}
$P \gets$ topological closure of DNF of $\exists \vec{x}'.F$ (see \cite{LPAR:Monniaux2008})\; \label{ln:start}
\tcc{Compute connected regions}
\While{$\exists p_1,p_2 \in P$  with $p_1 \land p_2$ satisfiable}{
  $P \gets (P \setminus \{p_1,p_2\}) \cup \{ p_1 \lor p_2 \}$
} \label{ln:end}
$(S,\mathcal{V}) \gets \texttt{abstract-VASRS}(F, P)$\;
\Return{$\reach{\mathcal{V}}(S\vec{x},S\vec{x}') \circ (\vec{x}' = \vec{x} \lor F)$} \label{ln:fixup}
\caption{\texttt{iter-VASRS}$(F)$} \label{alg:clcomp}
\end{algorithm} 


\paragraph{Precision improvement}

The \texttt{abstract-VASRS} algorithm uses predicates to infer the
control structure of a \VASS{}, but after computing the \VASS{}
abstraction, \texttt{iter-VASRS} makes no further use of the
predicates (i.e., the predicates are irrelevant in the computation of
$\reach{\mathcal{V}}$).  Predicates can be used to improve
\texttt{iter-VASRS} as follows:  the reachability relation of a
\VASS{} is expressed by a formula that uses auxiliary variables to
represent the state at which the computation begins and ends
\cite{RP:HH2014}.  These variables can be used to encode that the
pre-state of the transitive closure must satisfy the predicate
corresponding to the begin state and the post-state must satisfy the
predicate corresponding to the end state.  As an example, consider the
Figure~\ref{fig:vassexamp} and suppose that we wish to prove the
invariant $x \leq 2i$ under the pre-condition $i = 0 \land x = 0$.
While this invariant holds, we cannot prove it because there is
counter example if the computation begins at $i\%2 == 1$.  By applying
the above improvement, we can prove that the computation must begin at
$i\%2 == 0$, and the invariant is verified.

\section{Evaluation} \label{sec:evaluation}

The goals of our evaluation is the answer the following questions:
\begin{itemize}
\item Are \VAS{} sufficiently expressive to be able to generate
  accurate loop summaries?
\item Does the \VASS{} technique improve upon the precision of \VAS{}?
\item Are the \VAS{}/\VASS{} loop summarization algorithms performant?
\end{itemize}

\begin{figure}[b]
\begin{center}
\begin{tabular}{|lc||cr|cr|cr|cr|cr|}
\hline
 & & \multicolumn{2}{c|}{\VAS{}} & \multicolumn{2}{c|}{\VASS{}} & \multicolumn{2}{c|}{CRA} & \multicolumn{2}{c|}{SeaHorn} & \multicolumn{2}{c|}{UltAuto}\\
 & & \#safe & time & \#safe & time & \#safe & time & \#safe & time & \#safe & time\\\hline\hline
C4B    & 35 & 21 & 37.9    & \textbf{31} & 35.4    & 27 & \textbf{33.1}    & 23 & 2434.4  & 25 & 3881.6  \\
HOLA   & 46 & 32 & 57.2    & 39 & 73.0    & \textbf{40} & \textbf{56.0}    & 35 & 2115.0  & 36 & 2995.9  \\
svcomp19-int & 84 & 68 & \textbf{86.9}    & \textbf{78} & 184.5   & 76 & 91.9    & 62 & 3038.0  & 64 & 6923.5  \\
\hline
\end{tabular}
\end{center}
\caption{Experimental results. \label{fig:experiments}}
\end{figure}

We implemented our loop summarization procedure and the compositional
whole-program summarization technique described in
Section~\ref{subsec:outline}.  We ran on a suite of 165 benchmarks,
drawn from the C4B \cite{PLDI:CHS15} and HOLA \cite{OOPSLA:DDLM13}
suites, as well as the safe, integer-only benchmarks in the loops
category of SV-Comp 2019 \cite{SVCOMP19}.  We ran each benchmark with
a time-out of 5 minutes, and recorded how many benchmarks were proved
safe by our \VAS{}-based technique and our \VASS{}-based technique.
For context, we also compare with CRA \cite{PACMPL:KCBR18} (a related
loop summarization technique), as well as SeaHorn \cite{CAV:GKKN15}
and UltimateAutomizer \cite{UltAuto} (state-of-the-art software model
checkers).  The results are shown in Figure~\ref{fig:experiments}.

The number of assertions proved correct using \VAS{} is comparable to
both SeaHorn and UltimateAutomizer, demonstrating that \VAS{} can
indeed model interesting loop phenomena.  \VASS{}-based summarization
significantly improves precision, proving the correctness of 93\% of
assertions in the svcomp suite, and more than any other tool in total. Note that
the most precise tool for each suite is not strictly better than
each of the other tools; in particular, there is only a single program
in the HOLA suite that neither \VASS{} nor CRA can prove safe.

CRA-based summarization is the most performant of all the compared
techniques, followed by \VAS{} and \VASS{}.  SeaHorn and
UltimateAutomizer employ abstraction-refinement loops, and so take
significantly longer to run the test suite.


\section{Related work} \label{sec:related-work}
\paragraph{Compositional analysis}
Our analysis follows the same high-level structure as compositional
recurrence analysis (CRA) \cite{FMCAD:FK15,PACMPL:KCBR18}.  Our
analysis differs from CRA in the way that it summarizes loops: we
compute loop summaries by over-approximating loops with vector addition
systems and computing reachability relations, whereas CRA computes loop
summaries by extracting recurrence relations and computing closed
forms.  The advantage of our approach is that is that we can use \VAS{} to
accurately model multi-path loops and can make theoretical guarantees
about the precision of our analysis; the advantage of CRA is its
ability to generate non-linear invariants.

\paragraph{Vector addition systems}
Our invariant generation method draws upon Haase and Halfon's
polytime procedure for computing the reachability relation of integer
vector addition systems with states and resets \cite{RP:HH2014}.
Generalization from the integer case to the rational case is
straightforward.  Continuous Petri nets \cite{PN:DA1987} are a related
generalization of vector addition systems, where time is taken to be
continuous (\VAS{}, in contrast, have rational state spaces but
discrete time).  Reachability for continuous Petri nets is computable polytime
\cite{FI:FH2015} and definable in $\LRA{}$ \cite{TACAS:BFHH2016}.

Sinn et al. present a technique for resource bound analysis
that is based on modeling programs by lossy vector addition system
with states \cite{CAV:SZV2014}.
Sinn et al. model
programs using vector addition systems with states over the natural
numbers, which enables them to use termination bounds for VASS to
compute upper bounds on resource usage.  In contrast, we use VASS with resets over
the rationals, which (in contrast to VASS over $\mathbb{N}$) have a $\LIRA{}$-definable
reachability relation, enabling us to summarize loops.  Moreover, Sinn
et al.'s method for extracting VASS models of programs is heuristic,
whereas our method gives precision guarantees.

\paragraph{Affine and polynomial programs}
The problem of \textit{polynomial} invariant generation has been
investigated for various program models that generalize \VAS{},
including solvable polynomial loops \cite{ISAAC:RCK2004}, (extended)
P-solvable loops \cite{TACAS:Kovacs2008,VMCAI:HJK2018}, and affine
programs \cite{LICS:HOPW2018}.  Like ours, these techniques are
\textit{predictable} in the sense that they can make theoretical
guarantees about invariant quality.  The kinds invariants that can be
produced using these techniques (conjunctions of polynomial equations)
is incomparable with those generated by the method presented in this paper ($\LIRA{}$
formulas).

\paragraph{Symbolic abstraction}  The main contribution of this paper
is a technique for synthesizing the best abstraction of a transition
formula expressible in the language of \VAS{} (with or without
states).  This is closely related to the \textit{symbolic abstraction}
problem, which computes the best abstraction of a formula within an
abstract domain.  The problem of computing best abstractions has been
undertaken for finite-height abstract domains \cite{VMCAI:RSY2004},
template constraint matrices (including intervals and octagons)
\cite{POPL:LAKGC2014}, and polyhedra \cite{CAV:TR2012,FMCAD:FK15}.  Our
best abstraction result differs in that (1) it is for a disjunctive
domain and (2) the notion of ``best'' is based on simulation rather
than the typical order-theoretic framework.



\bibliographystyle{abbrv}
\bibliography{references}

\begin{thebibliography}{10}

\bibitem{TACAS:BFHH2016}
M.~Blondin, A.~Finkel, C.~Haase, and S.~Haddad.
\newblock Approaching the coverability problem continuously.
\newblock In {\em TACAS}, pages 480--496, 2016.

\bibitem{PLDI:CHS15}
Q.~Carbonneaux, J.~Hoffmann, and Z.~Shao.
\newblock Compositional certified resource bounds.
\newblock In {\em PLDI}, 2015.

\bibitem{PN:DA1987}
R.~David and H.~Alla.
\newblock Continuous {P}etri nets.
\newblock In {\em Proc. 8th Eur. Workshop Applic. Theory Petri Nets}, pages
  275--294, 1987.

\bibitem{OOPSLA:DDLM13}
I.~Dillig, T.~Dillig, B.~Li, and K.~McMillan.
\newblock Inductive invariant generation via abductive inference.
\newblock In {\em OOPSLA}, 2013.

\bibitem{FMCAD:FK15}
A.~Farzan and Z.~Kincaid.
\newblock Compositional recurrence analysis.
\newblock In {\em FMCAD}, 2015.

\bibitem{FI:FH2015}
E.~Fraca and S.~Haddad.
\newblock Complexity analysis of continuous petri nets.
\newblock {\em Fundam. Inf.}, 137(1):1--28, Jan. 2015.

\bibitem{CAV:GKKN15}
A.~Gurfinkel, T.~Kahsai, A.~Komuravelli, and J.~Navas.
\newblock The {SeaHorn} verification framework.
\newblock In {\em CAV}, 2015.

\bibitem{RP:HH2014}
C.~Haase and S.~Halfon.
\newblock Integer vector addition systems with states.
\newblock In {\em Reachability Problems}, pages 112--124, 2014.

\bibitem{UltAuto}
M.~Heizmann, Y.~Chen, D.~Dietsch, M.~Greitschus, J.~Hoenicke, Y.~Li, A.~Nutz,
  B.~Musa, C.~Schilling, T.~Schindler, and A.~Podelski.
\newblock Ultimate automizer and the search for perfect interpolants -
  (competition contribution).
\newblock In {\em TACAS}, pages 447--451, 2018.

\bibitem{LICS:HOPW2018}
E.~Hrushovski, J.~Ouaknine, A.~Pouly, and J.~Worrell.
\newblock Polynomial invariants for affine programs.
\newblock In {\em Logic in Computer Science}, pages 530--539, 2018.

\bibitem{VMCAI:HJK2018}
A.~Humenberger, M.~Jaroschek, and L.~Kov{\'{a}}cs.
\newblock Invariant generation for multi-path loops with polynomial
  assignments.
\newblock In {\em VMCAI}, pages 226--246, 2018.

\bibitem{KM1969}
R.~M. Karp and R.~E. Miller.
\newblock Parallel program schemata.
\newblock {\em J. Comput. Syst. Sci.}, 3(2):147--195, May 1969.

\bibitem{PLDI:KBFR17}
Z.~Kincaid, J.~Breck, A.~{Forouhi Boroujeni}, and T.~Reps.
\newblock Compositional recurrence analysis revisited.
\newblock In {\em PLDI}, 2017.

\bibitem{PACMPL:KCBR18}
Z.~Kincaid, J.~Cyphert, J.~Breck, and T.~Reps.
\newblock Non-linear reasoning for invariant synthesis.
\newblock {\em PACMPL}, 2(POPL):54:1--54:33, 2018.

\bibitem{TACAS:Kovacs2008}
L.~Kov{\'a}cs.
\newblock Reasoning algebraically about {P}-solvable loops.
\newblock In {\em TACAS}, 2008.

\bibitem{POPL:LAKGC2014}
Y.~Li, A.~Albarghouthi, Z.~Kincaid, A.~Gurfinkel, and M.~Chechik.
\newblock Symbolic optimization with {SMT} solvers.
\newblock In {\em POPL}, pages 607--618, 2014.

\bibitem{LPAR:Monniaux2008}
D.~Monniaux.
\newblock A quantifier elimination algorithm for linear real arithmetic.
\newblock In {\em Int.\ Conf.\ on Logic for Programming, Art.\ Intell., and
  Reasoning}, pages 243--257, 2008.

\bibitem{VMCAI:RSY2004}
T.~W. Reps, S.~Sagiv, and G.~Yorsh.
\newblock Symbolic implementation of the best transformer.
\newblock In {\em VMCAI}, pages 252--266, 2004.

\bibitem{ISAAC:RCK2004}
E.~Rodr\'{\i}guez-Carbonell and D.~Kapur.
\newblock Automatic generation of polynomial loop invariants: {A}lgebraic
  foundations.
\newblock In {\em ISSAC}, pages 266--273, 2004.

\bibitem{extended}
J.~{Silverman} and Z.~{Kincaid}.
\newblock {Loop Summarization with Rational Vector Addition Systems (extended
  version)}.
\newblock {\em arXiv e-prints}, page arXiv:1905.06495, May 2019.

\bibitem{CAV:SZV2014}
M.~Sinn, F.~Zuleger, and H.~Veith.
\newblock A simple and scalable static analysis for bound analysis and
  amortized complexity analysis.
\newblock In {\em CAV}, pages 745--761, 2014.

\bibitem{SVCOMP19}
8th {I}nt.\ competition on software verification ({SV-COMP19}).
\newblock \url{https://sv-comp.sosy-lab.org/2019/}, 2019.

\bibitem{JACM:Tarjan1981b}
R.~E. Tarjan.
\newblock A unified approach to path problems.
\newblock {\em J. ACM}, 28(3):577--593, July 1981.

\bibitem{CAV:TR2012}
A.~Thakur and T.~Reps.
\newblock A method for symbolic computation of abstract operations.
\newblock In {\em CAV}, pages 174--192, 2012.

\end{thebibliography}

 \section{Proofs}\label{sec:app}

\alphahat*
\begin{proof}
 Let $C$ be a consistent, conjunctive transition formula and let
 $(S, \VASSYM{}) = \hat{\alpha}(C)$ be a \VAS{} abstraction.  Clearly we
 have that $C \Vdash_{S} \VASSYM{}$---it remains to show that
 $\hat{\alpha}(C)$ is \textit{best}.  Suppose that $(\widetilde{S}, \widetilde{\VASSYM{}})$
 is a \VAS{} abstraction such that $C
 \Vdash_{\widetilde{S}} \widetilde{\VASSYM{}}$.  We must show that there exists a
 linear simulation $T$ such that $\VASSYM{} \Vdash_T \widetilde{\VASSYM{}}$
 and $\widetilde{S} = T S$.

 First, we show that there is a single transition $(\widetilde{\res},\widetilde{\adv})
 \in \widetilde{\VASSYM{}}$ that simulates $C$ (i.e., $C \Vdash_{\widetilde{S}}
 \{(\widetilde{\res},\widetilde{\adv})\}$).  This follows essentially from the fact
 that linear rational arithmetic is a convex theory; for completeness,
 we make an explicit argument. By the
 well-ordering principle, it is sufficient to prove that if $\widetilde{\VASSYM{}} = \{
 (\widetilde{\res_1},\widetilde{\adv_1}),...,(\widetilde{\res_n},\widetilde{\adv_n}) \}$ is a \VAS{} such
 that $C \Vdash_{\widetilde{S}} \widetilde{\VASSYM{}}$ and if there is no proper subset $U$ of $\widetilde{\VASSYM{}}$
 such that $C \not\Vdash_{\widetilde{S}} U$, then we must have $n=1$.  For a
 contradiction, suppose $n>1$, and let $U_1
 =\{(\widetilde{\res_1},\widetilde{\adv_1})\}$ and $U_2 = \{
 (\widetilde{\res_2},\widetilde{\adv_2}),...,(\widetilde{\res_{n}},\widetilde{\adv_{n}}) \}$.  Since
 $C \not\Vdash_{\widetilde{S}} U_1$, there is a
 transition $\vec{u}_1 \rightarrow_C \vec{v}_1$ such that
 $\widetilde{S}\vec{u}_1 \not\rightarrow_{U_1}
 \widetilde{S}\vec{v}_1$.  Since $C \not\Vdash_{\widetilde{S}} U_2$ there is a
 transition $\vec{u}_2 \rightarrow_C \vec{v}_2$ such that
 $\widetilde{S}\vec{u}_2 \not\rightarrow_{U_2} \widetilde{S}\vec{v}_2$.
 Geometrically, $C$ forms a convex polyhedron to which the points
 $(\vec{u}_1,\vec{v}_1)$ and $(\vec{u}_2,\vec{v}_2)$ belong.  By
 convexity, every point on the line segment from
 $(\vec{u}_1,\vec{v}_1)$ and $(\vec{u}_2,\vec{v}_2)$ belongs to $C$;
 that is, for all $k \in [0,1]$ we have $(k\vec{u}_1 + (1-k)\vec{u}_2)
 \rightarrow_C (k\vec{v}_1 + (1-k)\vec{v}_2)$.  Since there are
 infinitely many transitions along the line segment and each one must
 have a corresponding transition in $\widetilde{\VASSYM{}}$ that simulates it, there must
 exist some $i \in \{1,...,n\}$ such that the set $A_i$ of transitions
 that are simulated by transition $(\widetilde{\res_i},\widetilde{\adv_i})$,
 \[ A_i = \{ (\vec{u},\vec{v}) : \widetilde{S}\vec{u} \rightarrow_{(\widetilde{\res_i},\widetilde{\adv_i})} \widetilde{S}\vec{v} \} = \{ (\vec{u},\vec{v}) : \widetilde{S}\vec{v} = \widetilde{\res_i} 
 \hadamard{} \widetilde{S}\vec{u} + \widetilde{\adv_i} \}\ , \]
 contains at least two points on the line segment.  Since $A_i$ is an
 affine space and contains at least two points on the line segment, it
 must contain all points on the entire line that connects
 $(\vec{u}_1,\vec{v}_1)$ and $(\vec{u}_2,\vec{v}_2)$ (and in
 particular the points $(\vec{u}_1,\vec{v}_1)$ and
 $(\vec{u}_2,\vec{v}_2)$ themselves).  Since $\widetilde{S}\vec{u}_1
 \rightarrow_{(\widetilde{\res_i},\widetilde{\adv_i)}} \widetilde{S}\vec{v}_1$ and (by
 construction) $\widetilde{S}\vec{u}_1 \not\rightarrow_{U_1}
\widetilde{S}\vec{v}_1$, we cannot have $i=1$.  Since $\widetilde{S}\vec{u}_2
 \rightarrow_{(\widetilde{\res_i},\widetilde{\adv_i})} \widetilde{S}\vec{v}_2$ and (by
 construction) $\widetilde{S}\vec{u}_2 \not\rightarrow_{U_2}
\widetilde{S}\vec{v}_2$ we also cannot have $i \neq 1$, a contradiction.

Next we construct a matrix $T$ such that
$T S = \widetilde{S}$ and that $\VASSYM{} \Vdash_{T} \widetilde{\VASSYM{}}$.
Recall that $\hat{\alpha}(C)$ is defined to be
$(S,\{(\res,\adv)\})$, with
\[ S \defeq \begin{bmatrix} \vec{s}_1\\\vdots\\\vec{s}_d\end{bmatrix} \hspace*{1cm} \res \defeq [\underbrace{0 \cdots 0}_{m\text{ times}} \overbrace{1 \cdots 1}^{(d-m)\text{ times}}\!\!\!]\ \hspace*{1cm} \adv \defeq \begin{bmatrix} a_1\\\vdots\\a_d\end{bmatrix} \]
    and where $\{\tuple{\vec{s}_1, a_1}, ..., \tuple{\vec{s}_m,a_m} \}$
    is a basis for the vector space $\textit{Res}_C \defeq \{ \tuple{\vec{s}, a} : C \models \vec{s} \cdot \vec{x'} = a\}$ 
    and $\{\tuple{\vec{s}_{m+1}, a_{m+1}}, ..., \tuple{\vec{s}_d,a_d} \}$ 
    is a basis for the vector space
    $\textit{Inc}_C = \{\tuple{\vec{s}, a} : C \models \vec{s} \cdot \vec{x'} = \vec{s} \cdot \vec{x} + a\}$.
    We form the $i$th row of the matrix $T$, $T_i$, as follows.
    Suppose that $\widetilde{\res_i} = 0$ (the case for $\widetilde{\res_i}=1$ is similar).  Since
    $C \Vdash_{\widetilde{S}} \{(\widetilde{\res},\widetilde{\adv})\}$, we have (using
    $\widetilde{S}_j$ to denote the $j$th row of $\widetilde{S}$)
    \begin{align*}
      C &\models \widetilde{S}\vec{x}' = \widetilde{\res} \hadamard{} \widetilde{S}\vec{x} + \widetilde{\adv}\\
      &\equiv \bigwedge_{j=1}^{d'} \widetilde{S}_j \cdot \vec{x}' = \widetilde{\res}_j\widetilde{S}_j' \cdot \vec{x} + \widetilde{\adv}_j\\
      &\models  \widetilde{S}_i \cdot \vec{x}'  = \widetilde{\res}_i\widetilde{S} \cdot \vec{x} + \widetilde{\adv}_i\\
      &=\widetilde{S}_i \cdot \vec{x}' = \widetilde{\adv}_i \,
    \end{align*}
    and thus we may conclude that $\tuple{\widetilde{S}_i,\widetilde{\adv_i}} \in \textit{Res}_C$. 
     It follows that there exist unique $t_1,...,t_m \in
    \mathbb{Q}$ such that $t_{1} \tuple{S_1, \adv_1} + \dotsi +
    t_{m} \tuple{S_m, \adv_m} = \tuple{\widetilde{S}_i, \widetilde{\adv}_i}$.  We take
    $T_i = \begin{bmatrix}t_1 & ... & t_m & 0 & ... & 0\end{bmatrix}$,
      and observe that $T_i S = \widetilde{S}_i$. Since this holds for all $i$, we
      have $T S=
      \widetilde{S}$.  For
      $\VASSYM{} \Vdash_{T} \widetilde{\VASSYM{}}$, we suppose that $\vec{u}
      \rightarrow_{\VASSYM{}} \vec{v}$ and prove that $T\vec{u} \rightarrow_{\widetilde{\VASSYM{}}} T\vec{v}$. 
      First, note that $\rim{T_i}\res = 0 = \widetilde{\res}_i$. Next observe that $T_i\adv =
      \widetilde{\adv}_i$. For each $i$, 
      $T_i\vec{v} = T_i(\res \hadamard{} \vec{u} + \adv)
        = T_i (\res\hadamard{}\vec{u}) + T_i \adv = (\rim{T_i}{\res}) 
        (T_i\vec{u}) + T_i \adv$, and therefore $T_i\vec{v} = \widetilde{\res}_i(T_i\vec{u}) + \widetilde{\adv}_i$. It follows that
      $T\vec{v} = \widetilde{\res} \hadamard{} T\vec{u} + \widetilde{\adv}$,
      and since $(\widetilde{\res},\widetilde{\adv}) \in \widetilde{\VASSYM{}}$ we have
      $T\vec{u} \rightarrow_{\widetilde{\VASSYM{}}} T\vec{v}$.
  \end{proof}

\coherence*
\begin{proof}
  Let $\VASSYM{}^1$ and $\VASSYM{}^2$ be \VAS{} (of dimension $d$ and $e$, respectively)
  and let $T \in \mathbb{Q}^{e \times d}$.
  Assume that $T$ is not coherent with respect to $\VASSYM{}^1$.
  Then there exist some $i,j,k$ such that $T_{ij}$ and $T_{ik}$ are
  non-zero and $j \not\equiv_{\VASSYM{}^1} k$.  The matrix defined by the 
  $i$th row of $T$
  is incoherent with respect to $\VASSYM{}^1$ and forms a linear
  simulation from $\VASSYM{}^1$ to the projection of $\VASSYM{}^2$
  onto its $i$th coordinate.  Thus, without loss of generality, we may
  assume that $e=1$ and $i=1$.

  Since $j \not\equiv_{\VASSYM{}^1} k$ there is some
  $(\res,\adv) \in \VASSYM{}^1$ such that $\resnv_j \neq \resnv_k$.
  Without loss of generality, assume $\resnv_j = 1$ and $\resnv_k = 0$.  We will
  show that there must be a transition $0 \rightarrow_{\VASSYM{}^2}
  z$ for all $z \in \mathbb{Q}$; this is a contradiction because
  $\rightarrow_{\VASSYM{}^2}$ is the transition relation of a \VAS{} and
  therefore finitely branching.  Let $z \in \mathbb{Q}$ be arbitrary.
  Let $\vec{e}_j$ and $\vec{e}_k$ denote the unit vectors in
  directions $j$ and $k$, respectively.  Since $T\mathbf{e}_j =
  T_{1j}$ and $T\mathbf{e}_k = T_{1k}$; both are non-zero by
  assumption.  Let $\mathbf{u} = \frac{z -
    T\adv}{T\mathbf{e}_j}\mathbf{e}_j +
  \frac{T\adv-z}{T\mathbf{e}_k}\mathbf{e}_k$, and let
  $\mathbf{v} = \res \hadamard{} \mathbf{u} + \adv$.  Since
  $\mathbf{u} \rightarrow_{\VASSYM{}^1} \mathbf{v}$ and $\VASSYM{}^1
  \Vdash_T \VASSYM{}^2$, we must have $T\mathbf{u}
  \rightarrow_{{\VASSYM{}^2}} T\mathbf{v}$.  Finally, calculate:
  \begin{center}
    \begin{minipage}{0.42\textwidth}
  \begin{align*}
    T\vec{u} &= T\left(\frac{z - T\adv}{T\mathbf{e}_j}\mathbf{e}_j +
    \frac{T\adv-z}{T\mathbf{e}_k}\mathbf{e}_k\right)\\
    &=\frac{z - T\adv}{T\mathbf{e}_j}T\mathbf{e}_j +
    \frac{T\adv-z}{T\mathbf{e}_k}T\mathbf{e}_k\\
    &= 0
  \end{align*}
    \end{minipage}
    \hfill
  \begin{minipage}{0.57\textwidth}
    \begin{align*}
      T\vec{v} &= T(\res \hadamard{} \mathbf{u} + \adv)\\
      &= T\left(\res \hadamard{} \left(\frac{z - T\adv}{T\mathbf{e}_j}\mathbf{e}_j +
      \frac{T\adv-z}{T\mathbf{e}_k}\mathbf{e}_k\right)+\adv\right)\\
      &= T\left(\frac{z - T\adv}{T\mathbf{e}_j}\resnv_j\mathbf{e}_j
      + \frac{T\adv-z}{T\mathbf{e}_k}\resnv_k\mathbf{e}_k + \adv \right)\\
      &= T\left(\frac{z - T\adv}{T\mathbf{e}_j}\mathbf{e}_j + \adv \right)\\
      &= \frac{z - T\adv}{T\mathbf{e}_j}T\mathbf{e}_j + T\adv\\
      &= z
    \end{align*}
  \end{minipage}
  \end{center}
\end{proof}

\lemimage*
\begin{proof}
  Suppose $\vec{u} \rightarrow_{\VASSYM{}} \vec{v}$.  Then there
  exists a transformer $(\res, \adv) \in \VASSYM{}$ such that
  $\vec{v} = \res * \vec{u} + \adv$.  It follows that $T\vec{v} = T(\res \hadamard{} \vec{u} + \adv)
  = T(\res \hadamard{} \vec{u}) + T\adv = (\rim{T}{\res}) \hadamard{}
  (T\vec{u}) + T \adv$.  Since $(\rim{T}{\res}, T\adv) \in \image{\VASSYM{}}{T}$, $T\vec{u}
   \rightarrow_{\image{\VASSYM{}}{T}} T\vec{v}$.  The other direction is symmetric.
\end{proof}

\join*
\begin{proof}
Algorithm~\ref{alg:join} takes as input $(S^1, \VASSYM{}^1)$ 
and $(S^2, \VASSYM{}^2)$ and constructs matrices $T^1$ and $T^2$
such that $\VASSYM{}^1 \models_{T^1} \VASSYM{}$, $\VASSYM{}^2 \models_{T^2} \VASSYM{}$,
and $T^1 S^1 = S = T^2 S^2$. Clearly, $(S, \VASSYM{})$ is an upper bound of $(S^1, \VASSYM{}^1)$ 
and $(S^2, \VASSYM{}^2)$ in the $\preceq$ order. We proceed by showing that $(S, \VASSYM{})$ is
a least upper bound of $(S^1, \VASSYM{}^1)$ and $(S^2, \VASSYM{}^2)$
(during which we also show $(S, \VASSYM{})$ is normal).
 
Let $(\widetilde{S}, \widetilde{\VASSYM{}})$ be a \VAS{} abstraction that is an upper bound of
 $(S^1, \VASSYM{}^1)$ and $(S^2, \VASSYM{}^2)$ in the $\preceq$ order.
By definition of $\preceq$, there exist linear transformations $\widetilde{T}^1$ and $\widetilde{T}^2$
such that $\widetilde{T}^1 S^1 = \widetilde{S} = \widetilde{T}^2 S^2$. Furthermore, by Lemma~\ref{lem:coherence},
$\widetilde{T}^1$ must be coherent with respect to $\VASSYM{}^1$ and 
$\widetilde{T}^2$ must be coherent with respect to $\VASSYM{}^2$.
To prove that $(S,\VASSYM{})$ is a \textit{least} upper bound, we need to show that
$(S, \VASSYM{}) \preceq (\widetilde{S}, \widetilde{\VASSYM{}})$. Recall that 
$(S, \VASSYM{}) \preceq (\widetilde{S}, \widetilde{\VASSYM{}})$ is defined by existence of a linear transformation $T$ such
that (1) $TS = \widetilde{S}$ and (2) for all $\vec{u} \rightarrow_{\VASSYM{}} \vec{v}$, we have that
$T \vec{u} \rightarrow_{\widetilde{\VASSYM{}}} T \vec{v}$. We show that 
$(S, \VASSYM{}) \preceq (\widetilde{S}, \widetilde{\VASSYM{}})$ by constructing a matrix $T$
such that $T T^1 = \widetilde{T}^1$ and $T T^2 = \widetilde{T}^2$.

Let us first reason about
why constructing a matrix $T$ such that $T T^1 = \widetilde{T}^1$ and $T T^2 = \widetilde{T}^2$ 
is sufficient to prove that $(S, \VASSYM{}) \preceq (\widetilde{S}, \widetilde{\VASSYM{}})$.
First, recall that $T^1 S^1 = S$ and that $\widetilde{T}^1 S^1 = \widetilde{S}$.
Thus, $T S = T T^1 S^1$ and by substituting $T T^1$ with $\widetilde{T}^1$, we arrive at
$T S = \widetilde{T}^1 S^1 =  \widetilde{S}$.
Next, we show that if $\vec{u} \rightarrow_{\VASSYM{}} \vec{v}$ then $T \vec{u} \rightarrow_{\widetilde{\VASSYM{}}} T \vec{v}$.
Observe that $\VASSYM{}$ is constructed as the union of the image of $\VASSYM{}^1$ under $T^1$ together with the
image of $\VASSYM{}^2$ under $T^2$. Lemma~\ref{lem:image} informs us that if $\vec{u} \rightarrow_{\VASSYM{}} \vec{v}$,
then there must exist a $\bar{\vec{u}}$ and a $\bar{\vec{v}}$ such that either $T^1 \bar{\vec{u}} = \vec{u}$,
$T^1 \bar{\vec{v}} = \vec{v}$, and $\bar{\vec{u}} \rightarrow_{\VASSYM{}^1} \bar{\vec{v}}$
(note also that in this case $\vec{u} \rightarrow_{\VASSYM{}} \vec{v}$ can be equivalently expressed 
$T^1 \bar{\vec{u}} \rightarrow_{\VASSYM{}} T^1 \bar{\vec{v}}$), or
$T^2 \bar{\vec{u}} = \vec{u}$, $T^2 \bar{\vec{v}} = \vec{v}$, and $\bar{\vec{u}} \rightarrow_{\VASSYM{}^2} \bar{\vec{v}}$
(note also that in this case $\vec{u} \rightarrow_{\VASSYM{}} \vec{v}$ can be equivalently expressed
$T^2 \bar{\vec{u}} \rightarrow_{\VASSYM{}} T^2 \bar{\vec{v}}$).
Since $(\widetilde{S}, \widetilde{\VASSYM{}})$ is an upper bound of $(S^1, \VASSYM{}^1)$ and $(S^2, \VASSYM{}^2)$,
the former case implies that
$\widetilde{T}^1 \bar{\vec{u}} \rightarrow_{\widetilde{\VASSYM{}}} \widetilde{T}^1 \bar{\vec{v}}$
(and by substitution with $T T^1 = \widetilde{T}^1$, we have that
$T T^1 \bar{\vec{u}} \rightarrow_{\widetilde{\VASSYM{}}} T T^1 \bar{\vec{v}}$)
and the latter case implies that
$\widetilde{T}^2 \bar{\vec{u}} \rightarrow_{\widetilde{\VASSYM{}}} \widetilde{T}^2 \bar{\vec{v}}$
(and by substitution with $T T^2 = \widetilde{T}^2$, we have that
$T T^2 \bar{\vec{u}} \rightarrow_{\widetilde{\VASSYM{}}} T T^2 \bar{\vec{v}}$). Thus,
$T \vec{u} \rightarrow_{\widetilde{\VASSYM{}}} T \vec{v}$.

We now show how to construct a matrix $T$ such that 
$T T^1 = \widetilde{T}^1$ and $T T^2 = \widetilde{T}^2$. 
We construct $T$ on a row by row level, showing that
for each row $i$ of $\widetilde{T}^1$ ($\widetilde{T}^2$ is the same size),
 there is a vector $\vec{t}^i$ such that
$\vec{t}^i T^1 = \widetilde{T}^1_i$ and $\vec{t}^i T^2 = \widetilde{T}^2_i$.
$T_i$ is then simply equal to $\vec{t}^i$.

We proceed by reasoning about coherence classes. Recall that a linear simulation
of a \VAS{} must be coherent with respect to that \VAS{} (if the result of the simulation is a new \VAS{}).
So, $\widetilde{T}^1_i$ is a row vector that is coherent with respect to $\VASSYM{}^1$
and $\widetilde{T}^2_i$ is a row vector that is coherent with respect to $\VASSYM{}^2$. 
Thus, there is a coherence
class $C^{i,1}$ of $\VASSYM{}^1$ and a coherence class $C^{i,2}$ of
$\VASSYM{}^2$ such that $\widetilde{T}^1_i = \widetilde{\vec{t}}^{i,1} \proj{C^{i,1}}$ and
$\widetilde{T}^2_i = \widetilde{\vec{t}}^{i,2} \proj{C^{i,2}}$ for some vectors $\widetilde{\vec{t}}^{i,1}$
and $\widetilde{\vec{t}}^{i,2}$. Observe that 
\[\mathcal{T}(C^{i,1},C^{i,1}) \defeq \{ (\widetilde{\vec{t}}^{i,1}\proj{C^{i,1}}, \widetilde{\vec{t}}^{i,2}\proj{C^{i,2}}) : \widetilde{\vec{t}}^{i,1}\proj{C^{i,1}}S^1 = 
\widetilde{\vec{t}}^{i,2}\proj{C^{i,2}}S^2 \}\]
is a vector space. 
For each coherence class $C^{i,1}$ of $\VASSYM{}^1$ and $C^{i,2}$ of $\VASSYM{}^2$, there is a set $C^{(i,1),(i,2)}$
such that (1) the rows of $\proj{C^{(i,1),(i,2)}}S$ forms a basis for the intersection of the rowspace
of $\proj{C^{i,1}}S^1$ with the rowspace of $\proj{C^{i,2}}S^2$ 
(this follows directly from the $\textit{pushout}$ procedure used in the algorithm)
and (2) $C^{(i,1),(i,2)}$ is a coherence class of $\VASSYM{}$. To see that $C^{(i,1),(i,2)}$ is a coherence class
of $\VASSYM{}$, observe that (1) for a \VAS{} $\bar{\VASSYM{}}$ constructed as
the image of another \VAS{} $\hat{\VASSYM{}}$ under a coherent linear transformation $\hat{T}$ with no zero rows,
$j \equiv_{\bar{\VASSYM{}}} k$ if and only if $\hat{T_j}$ and $\hat{T_k}$ both act on 
(contain non-zero values exclusively in columns corresponding to)
the same coherence class of $\hat{\VASSYM{}}$; and (2)
the coherence classes of a \VAS{} constructed as 
the union of two \VAS{}s $\hat{\VASSYM{}}$ and $\bar{\VASSYM{}}$ 
is equal to the pairwise intersection of each
coherence class of $\hat{\VASSYM{}}$ with each coherence class of $\bar{\VASSYM{}}$.

Note then that (1) for each coherence class $C$ of $\VASSYM{}$, the rows of $\proj{C} S$ form 
a basis for a vector space and thus $(S, \VASSYM{})$ is normal; and (2) there exists a $\vec{t}^i$ and a $\bar{\vec{t}}^i$
such that 
$\vec{t}^i S = \vec{t}^i T^1 S^1 = \widetilde{T}^1_i S^1$,
$\vec{t}^i S = \vec{t}^i T^2 S^2 = \widetilde{T}^2_i S^2$, and 
$\vec{t}^i =  \bar{\vec{t}}^i \proj{C^{(i,1),(i,2)}}$.
This implies that there exists $\vec{t}^{i,1}$ and $\vec{t}^{i,2}$ 
such that $\vec{t}^i T^1 = \vec{t}^{i,1} \proj{C^{i,1}}$ and 
$\vec{t}^i T^2 = \vec{t}^{i,2} \proj{C^{i,2}}$. Both $\proj{C^1} S^1$ and $\proj{C^2} S^2$ are invertible and thus we use these
equations together with the fact that $\widetilde{T}^1_i = \widetilde{\vec{t}}^{i,1} \proj{C^{i,1}}$ and
$\widetilde{T}^2_i = \widetilde{\vec{t}}^{i,2} \proj{C^{i,2}}$ to arrive at
$\vec{t}^i T^1 = \widetilde{T}^1_i$ and $\vec{t}^i T^2 = \widetilde{T}^2_i$.

\end{proof}

\vasabstraction*
\begin{proof}
We break this proof into three steps. We first show that the output of Algorithm~\ref{alg:core}, $(S,\VASSYM{})$, 
is a \VAS{} abstraction of its input transition formula $F$ at termination.
We next show that Algorithm~\ref{alg:core} always terminates. And we finally show that 
$(S,\VASSYM{})$ is a best abstraction of $F$ when $F$ is in $\LRA$.

\begin{enumerate}
\item
Supposing that Algorithm~\ref{alg:core} terminates, its output is a \VAS{} abstraction of $F$.
Recall that 
Algorithm~\ref{alg:core} terminates when $F \wedge \lnot \gamma(S,\VASSYM{})$ is unsatisfiable.

We prove this by contradiction. 
Assume that the algorithm terminated
and that there exists a $\vec{u}$ and $\vec{v}$ such that $\vec{u} \rightarrow_F \vec{v}$ and $S \vec{u} \not\rightarrow_{\VASSYM{}} S \vec{v}$. Then $\vec{u} \rightarrow_{F \wedge \lnot \gamma(S,\VASSYM{})} \vec{v}$; the algorithm would not have terminated yet.

\item
Next, we prove Algorithm~\ref{alg:core} always terminates. 
Let $(S^k,\VASSYM{}^k)$ denote the \VAS{} abstraction obtained 
just before Algorithm~\ref{alg:core} enters its $k^{th}$ loop iteration and let
$\Gamma^k$ denote the formula $\Gamma$ obtained at the same point in time.
Recall that $\Gamma$ is the transition formula
whose models are the transitions of $F$ that are not simulated by $(S,\VASSYM{})$.
Algorithm~\ref{alg:core} will only enter the $k^{th}$ loop iteration
if $\Gamma^k$ is satisfiable. 
If $\Gamma^k$ is satisfiable,
then there must exist some model $M$ of $\Gamma^k$ and 
some cube $C$ of the DNF of $F$ such that $M \models C$.
Algorithm~\ref{alg:core} sets
$(S^{k+1},\VASSYM{}^{k+1})$ equal to $(S^k, \VASSYM{}^k) \sqcup \hat{\alpha}(C)$.
By Proposition~\ref{prop:alphahat}, $\hat{\alpha}(C)$ is a \VAS{} abstraction of $C$.
For any formula $C$, a \VAS{} abstraction that is a least upper bound of a \VAS{} abstraction of $C$ 
together with any other \VAS{} abstraction must also be a \VAS{} abstraction of $C$. 
Thus, there will never be a model $M$ of $\Gamma^{k'}$ for any $k' > k$
such that $M \models C$.
Transition formulas are finite in size and have a finite number of cubes. Since a new
cube must be witnessed on each iteration, Algorithm~\ref{alg:core} must terminate.

\item
We now prove that $(S,\VASSYM{})$ is a best abstraction of $F$ when $F$ is in $\LRA$.
Assume that $F$ is in $\LRA$ and
let $(\widetilde{S}, \widetilde{\VASSYM{}})$ be a \VAS{} abstraction of $F$. We show that 
$(S,\VASSYM{}) \preceq (\widetilde{S},\widetilde{\VASSYM{}})$ via induction on the number of loop iterations in Algorithm~\ref{alg:core}.

Initially, $(S,\VASSYM{}) = (I, \emptyset)$. $(I, \emptyset)$ is less in the $\preceq$ order than any \VAS{} abstraction of the
same concrete dimension. So clearly $(I, \emptyset) \preceq (\widetilde{S},\widetilde{\VASSYM{}})$.

Let $(S^k,\VASSYM{}^k)$ denote the \VAS{} abstraction obtained
just before Algorithm~\ref{alg:core} enters its $k^{th}$ loop iteration.
Assume as the induction hypothesis that $(S^k,\VASSYM{}^k) \preceq (\widetilde{S},\widetilde{\VASSYM{}})$.
Let the $C$ denote the $\LRA$ cube of the DNF of $F$ selected on the $k^{th}$ iteration of the loop.
Then $(S^{k+1},\VASSYM{}^{k+1}) = (S^k, \VASSYM{}^k) \sqcup \hat{\alpha}(C)$. Since $C \models F$ and $F \Vdash_{\widetilde{S}} \widetilde{\VASSYM{}}$, it must be the case that $C \Vdash_{\widetilde{S}} \widetilde{\VASSYM{}}$. By Proposition~\ref{prop:alphahat},
$\hat{\alpha}(C)$ is a \VAS{} best abstraction for $C$ when $C$ is in $\LRA$.
Thus,
$\hat{\alpha}(C) \preceq (\widetilde{S},\widetilde{\VASSYM{}})$.
By induction hypothesis, $(S^k,\VASSYM{}^k) \preceq (\widetilde{S},\widetilde{\VASSYM{}})$. 
Therefore, by Proposition~\ref{prop:join}, $(S^{k+1},\VASSYM{}^{k+1}) \preceq (\widetilde{S},\widetilde{\VASSYM{}})$. Thus, $(S,\VASSYM{})$ is a best abstraction of $F$.

\end{enumerate}
\end{proof}

\vass*
\begin{proof}
Let $(S, \SYMVASS{}) = (S, (P, E))$ be the \VASS{} abstraction computed by Algorithm~\ref{alg:vasscore}
with input transition formula $F$ and input control states $P$. We first show
that $(S, \SYMVASS{})$ is a \VASS{} abstraction of $F$.

Define $\mathcal{E}^{p, q}(E) = \{(\res, \adv) : (p, (\res, \adv), q) \in E\}$.  
Intuitively, $(S, \mathcal{E}^{p, q}(E))$ is the \VAS{} abstraction derived from the
 transformers that go from control state $p$ to control state $q$ in the \VASS{} abstraction $(S, (P, E))$.
 Observe that for any \VASS{} abstraction $(\widetilde{S}, (P, \widetilde{E}))$ of $F$ 
 with control states $P$, we have that $F \Vdash_{\widetilde{S}} (P, \widetilde{E})$ 
 when (exactly when in $\LRA$ case) for each pair $p, q \in P$, there exists a linear
 simulation $\widetilde{T}^{p,q}$ such that 
 $\overline{\VASSYM{}}^{p, q} \Vdash_{\widetilde{T}^{p,q}} \mathcal{E}^{p, q}(\widetilde{E})$
 and $\widetilde{T}^{p,q} \overline{S}^{p, q} = \widetilde{S}$, where 
 $(\overline{S}^{p, q}, \overline{\VASSYM{}}^{p, q}) =  \texttt{abstract-VASR}(p(\vec{x})  \land F(\vec{x},\vec{x}') \land q(\vec{x}'))$. Algorithm~\ref{alg:vasscore} constructs $(S, \SYMVASS{})$ by letting
 $S$ equal the same simulation matrix in the \VAS{} abstraction 
$\bigsqcup_{p, q \in P} (\overline{S}^{p, q}, \overline{\VASSYM{}}^{p, q})$ and letting 
$\mathcal{E}^{p, q}(E)$ be the image of $\overline{\VASSYM{}}^{p, q}$ under the simulation matrix
$T^{p,q}$ (here a simulation from $(\overline{S}^{p, q}, \overline{\VASSYM{}}^{p, q})$ to 
$\bigsqcup_{p, q \in P} (\overline{S}^{p, q}, \overline{\VASSYM{}}^{p, q})$).
So clearly $(S, \SYMVASS{})$ is a \VASS{} abstraction of $F$.

We proceed by showing that if $F$ is in $\LRA$, then for any other \VASS{}
abstraction $(\widetilde{S}, \widetilde{\SYMVASS{}}) = (\widetilde{S}, (P, \widetilde{E}))$
of $F$ (with the same set of control states $P$), $(S, (P, E)) \preceq (\widetilde{S}, (P, \widetilde{E}))$.
It is sufficient to prove that there exists a $T$ such that $T S = \widetilde{T}$ and for each pair $p, q \in P$,
$\mathcal{E}^{p, q}(E) \Vdash_T \mathcal{E}^{p, q}(\widetilde{E})$.

From this point forward this proof is fairly similar the proof of Proposition~\ref{prop:join}.
Let $T^{p,q}$ be a simulation from $(\overline{S}^{p, q}, \overline{\VASSYM{}}^{p, q})$
to $(S, \mathcal{E}^{p, q}(E))$ and let $\widetilde{T}^{p,q}$ be a simulation from
$(\overline{S}^{p, q}, \overline{\VASSYM{}}^{p, q})$ to $(\widetilde{S}, \mathcal{E}^{p, q}(\widetilde{E}))$
(recall that $\widetilde{T}^{p,q}$ must exist since $F$ is $\LRA{}$).
We construct $T$ such that for any pair $p, q \in P$, we have that $T T^{p,q} = \widetilde{T}^{p,q}$.
Observe that $\mathcal{E}^{p, q}(E) \Vdash_T \mathcal{E}^{p, q}(\widetilde{E})$ and $T S = \widetilde{S}$
naturally follow
from such a construction:
If $\vec{u} \rightarrow_{\mathcal{E}^{p, q}(E)} \vec{v}$, then by Lemma~\ref{lem:image}
there is some $\bar{\vec{u}}$ and some $\bar{\vec{v}}$ such that $T^{p,q} \bar{\vec{u}} = \vec{u}$,
$T^{p,q} \bar{\vec{v}} = \vec{v}$, and
$\bar{\vec{u}} \rightarrow_{\overline{\VASSYM{}}^{p, q}} \bar{\vec{v}}$.
Then, by definition of linear simulation, we must have
$\widetilde{T}^{p,q} \bar{\vec{u}} \rightarrow_{\mathcal{E}^{p, q}(\widetilde{E})} \widetilde{T}^{p,q} \bar{\vec{u}}$.
From here, substitution gives us to 
$T \vec{u} \rightarrow_{\mathcal{E}^{p, q}(\widetilde{E})} T \vec{u}$.
Substitution also gives us 
$T S = T T^{p,q} \overline{S}^{p, q} = \widetilde{T}^{p,q} \overline{S}^{p, q} = \widetilde{S}$.

We are ultimately just reasoning about matrix multiplication and so we can
reason row by row. We show that for each row $i$ of $\widetilde{T}^{p,q}$ there exists a vector $\vec{t}^i$
such that for all $p, q \in P$ we have that $\vec{t}^i T^{p,q} = \widetilde{T}^{p,q}_i$. We prove that
such a $\vec{t}^i$ exists by simultaneously reason about coherence classes of the \VAS{}es 
$\overline{\VASSYM{}}^{p, q}$ for all $p, q \in P$.
For notational simplicity, we henceforth write
a pair $p, q \in P$ as an element of the cartesian product of $P$ with itself, $P \times P$,
and we let $|P \times P| = n$.

For any $j \in P \times P$, we have that $\widetilde{T}^{j} \overline{S}^{j} = \widetilde{S}$.
Thus, for a fixed row $i$ of $\widetilde{S}$, $\widetilde{S}_i$, we have that
$\widetilde{T}^{j}_i \overline{S}^{j} = \widetilde{S}_i$. 
We can rewrite $\widetilde{T}^{j}_i \overline{S}^{j}$ as $\widetilde{\vec{t}}^{i,j} \proj{C^{i,j}} \overline{S}^{j}$
for some coherence class $C^{i,j}$ of $\overline{\VASSYM{}}^{j}$ and some vector $\widetilde{\vec{t}}^{i,j}$.
Observe that
\[\mathcal{T}(C^{i,1},...C^{i,n}) \defeq 
\{ (\widetilde{\vec{t}}^{i,1}\proj{C^{i,1}},..., \widetilde{\vec{t}}^{i,n}\proj{C^{i,n}}) : \widetilde{\vec{t}}^{i,1}\proj{C^{i,1}}S^1 = ... =
\widetilde{\vec{t}}^{i,n}\proj{C^{i,n}}S^n \}\]
is a vector space.

So there exists a set $C^{(i,1),...,(i,n)}$ such that (1)
$\proj{C^{(i,1),...,(i,n)}} S$ is a basis for ($\cap_{j \in P \times P}$ the rowspace of $\proj{C^{i,j}}S^j$) 
and (2) $C^{(i,1),...,(i,n)}$ is a coherence class of
$(S, \SYMVASS{})$. To witness this fact, recall that
the simulation matrix of $(S, \SYMVASS{})$ is equal to the simulation matrix of
$\bigsqcup_{j \in P \times P} (\overline{S}^{j}, \overline{\VASSYM{}}^{j})$ and observe that the coherence classes
of $(S, \SYMVASS{})$ are equal to the coherence classes of $\bigsqcup_{j \in P \times P} (\overline{S}^{j}, \overline{\VASSYM{}}^{j})$.
See the proof of Proposition~\ref{prop:join} for a better understanding of why a basis for 
($\cap_{j \in P \times P}$ the rowspace of $\proj{C^{i,j}}S^j$) forms rows of $S$ belonging to the same coherence class of
$\bigsqcup_{j \in P \times P} (\overline{S}^{j}, \overline{\VASSYM{}}^{j})$ .

Thus, there exist some vector $\vec{t}^i$ and some vector $\bar{\vec{t}}^i$
 such that, $\vec{t}^i = \bar{\vec{t}}^i \proj{C^{(i,1),...,(i,n)}}$ and such that
 for all $j \in P \times P$, we have 
 $\vec{t}^i S = \widetilde{T}^j_i \overline{S}^{j}$.
 The first statement taken together with the fact that $\vec{t}^i S =  \vec{t}^i T^j \overline{S}^{j}$
 for all $j \in P \times P$
 implies that there exists a $\bar{\vec{t}}^{i,j}$
 such that $\vec{t}^i T^j = \bar{\vec{t}}^{i,j} \proj{C^{i,j}}$. Recall that for all $j \in P \times P$,
 $\proj{C^{i,j}} \overline{S}^{j}$ is invertible (\texttt{abstract-VASR} produces normal \VAS{}). 
 Recall also that we can rewrite $\widetilde{T}^j_i$ as $\widetilde{\vec{t}}^{i,j} \proj{C^{i,j}}$.
 So $\bar{\vec{t}}^{i,j} \proj{C^{i,j}} \overline{S}^{j} = \widetilde{\vec{t}}^{i,j} \proj{C^{i,j}} \overline{S}^{j}$
 and from this we can deduce that $\vec{t}^i T^j = \widetilde{T}^j_i$.
  \end{proof}


\end{document}